\def\BibTeX{{\rm B\kern-.05em{\sc i\kern-.025em b}\kern-.08em
    T\kern-.1667em\lower.7ex\hbox{E}\kern-.125emX}}
\newcommand{\toolname}{\textsc{BGPeek-a-Boo}\xspace}
\newtheorem{theorem}{Theorem}
\newtheorem{definition}{Definition}
\newcommand{\amppot}{\textsc{AmpPot}}
\newcommand{\constrained}{\emph{constrained}\xspace}
\newcommand{\unconstrained}{\emph{unconstrained}\xspace}
\newcommand{\rCPv}{\emph{customer-provider}\xspace}
\newcommand{\rPCv}{\emph{provider-customer}\xspace}
\newcommand{\rPPv}{\emph{peer-to-peer}\xspace}
\newcommand{\rCP}{\emph{CP}\xspace}
\newcommand{\rPP}{\emph{P2P}\xspace}
\DeclareMathOperator{\asn}{asn}
\DeclareMathOperator{\dominatees}{dominatees}
\DeclareMathOperator{\reachable}{reachable}
\DeclareMathOperator{\indChange}{indChange}
\DeclareMathOperator{\indStop}{indStop}
\newcommand{\AS}{\ensuremath\mathrm{AS}}
\newcommand{\CP}{\ensuremath\mathrm{CP}}
\newcommand{\PP}{\ensuremath\mathrm{P2P}}
\newcommand{\nodepath}{\ensuremath\pi}
\newcommand{\aspath}{\ensuremath\Pi}
\newcommand{\rp}{r_{\textrm{passive}}}
\newcommand{\ra}{\vec{r}_{\textrm{active}}}
\newcommand{\noeffect}{\texttt{NO\_EFFECT}}
\newcommand{\ttlchange}{\texttt{TTL\_CHANGE}}
\newcommand{\trafficstop}{\texttt{STOP}}
\newcommand{\allAS}{\mathcal{A}}
\newcommand{\candidates}{\mathcal{C}}
\newcommand{\logbook}{\mathcal{L}}
\newcommand{\probe}{\mathcal{P}}
\newcommand{\exclude}{\probe_{\textrm{inconsistent}}}
\newcommand{\last}{L}
\begin{document}

\title{BGPeek-a-Boo: Active BGP-based Traceback for Amplification DDoS Attacks}

\ifthenelse{\boolean{blind}}{
	\author{}
}{
	\author{\IEEEauthorblockN{Johannes Krupp}
	\IEEEauthorblockA{\textit{CISPA Helmholtz Center for Information Security} \\
	Saarbrücken, Saarland, Germany \\
	johannes.krupp@cispa.saarland}
	\and
	\IEEEauthorblockN{Christian Rossow}
	\IEEEauthorblockA{\textit{CISPA Helmholtz Center for Information Security} \\
	Saarbrücken, Saarland, Germany \\
	rossow@cispa.saarland}
	}
}

\maketitle

\begin{abstract}
	Amplification DDoS attacks inherently rely on IP spoofing to steer attack traffic to the victim.
	At the same time, IP spoofing undermines prosecution, as the originating attack infrastructure remains hidden. Researchers have therefore proposed various mechanisms to trace back amplification attacks (or IP-spoofed attacks in general).
	However, existing traceback techniques require either the cooperation of external parties or \emph{a priori} knowledge about the attacker.

	We propose \toolname, a BGP-based approach to trace back amplification attacks to their origin network.
	\toolname monitors amplification attacks with honeypots and uses \emph{BGP Poisoning} to temporarily shut down ingress traffic from selected Autonomous Systems.
	By systematically probing the entire AS space, we detect systems forwarding and originating spoofed traffic.
	We then show how a graph-based model of BGP route propagation can reduce the search space, resulting in a $5\times$ median speed-up and over $20\times$ for $\sfrac{1}{4}$ of all cases.
	\toolname achieves a unique traceback result $60\%$ of the time in a simulation-based evaluation supported by real-world experiments.
\end{abstract}

\begin{IEEEkeywords}
	Amplification DDoS, BGP Poisoning, Traceback, IP Spoofing
\end{IEEEkeywords}

\section{Introduction}
Amplification attacks~\cite{rossow2014amplification} continue to be one the most powerful type of DDoS attacks, reaching attack bandwidths as high as $1.7$ Tbps in 2018~\cite{netscout17tbps} or $2.3$ Tbps in 2020~\cite{zdnet23tbps}.
These attacks rely on \emph{IP spoofing}:
As IP header information is not authenticated, crucial fields such as the packet's source address can be set to arbitrary values by the attacker.
Even worse, IP spoofing not only enables these attacks in the first place, it also effectively hides the attack's origin.
Without knowing the true network origin, identifying the actors behind these attacks is nigh impossible.
Thus, a traceback mechanism for these attacks is of prime importance.

Previous traceback approaches for amplification attacks can only link attacks to scanners used in attack preparation~\cite{krupp2016identifying} or re-identify attacks from known sources~\cite{krupp2017linking}.
This restricts them to the subset of incidents where rich auxiliary information is known \emph{a priori}.
Approaches to trace back IP spoofing in general~\cite{john2009ddos} are mostly based around the idea of packet marking~\cite{song2001advanced,duwairi2004efficient,dong2005efficient,shokri2006ddpm,belenky2007deterministic,gao2007practical}, where routers encode path information in the packet header, or collecting flow telemetry data~\cite{snoeren2001hash,snoeren2002single,li2004large,korkmaz2007single,sung2008large}.
However, both require the cooperation of a large number of routers along the path and thus a widespread deployment on the Internet---something we have not seen despite these approaches being known for over a decade.

In this paper we propose \toolname, a novel approach to trace back amplification attacks that requires neither cooperation of on-path routers nor knowledge of potential attack sources.
The main insight behind our approach is as follows:
While attackers may spoof IP level information, they are usually tightly coupled to a given spoofing-capable network location.
Packets sent by the attacker are thus bound to the routes chosen by their network provider.
This allows us to use the Border Gateway Protocol (BGP) to identify the Autonomous System (AS) that emits the spoofed attack traffic, which constitutes a fundamental step towards fighting these attacks:
Once identified, prosecutors can contact the AS operators to investigate the perpetrators behind the attack, which must be customers of the AS.
Further, the spoofing AS can be pressured into implementing egress filtering by its peers, similar to what happened to McColo in 2008 (cf.~\cite{mccolo}).

\toolname, shown in \autoref{fig:system}, consists of a number of amplification honeypots organized in multiple \texttt{/24} prefixes and a BGP router that can advertise routes for these prefixes.
The honeypots emulate services that are vulnerable to amplification in order to be selected as reflectors in amplification attacks~\cite{kramer2015amppot}.
During attacks, these honeypots will receive spoofed traffic sent by the attacker.
Through \emph{BGP Poisoning} we can then exclude certain ASes from propagating routes towards our system.
In particular, depriving the attacker of a route causes the spoofed traffic to either switch to an alternative route, which may be observed by a change in TTL values at the honeypots, or to cease entirely.
Building on this observation, we systematically probe ASes to uncover those involved in forwarding attack traffic---eventually leading us to the spoofing AS itself.

In a second step, we show how AS relationship data can be used to drastically limit the search space.
For this we build a \emph{BGP flow graph} that captures how BGP advertisements propagate and analyze which systems are \emph{reachable} and \emph{dominated} by others.
We find that both algorithms achieve a perfect attribution result $100\%$ of the time in an idealized, and still over $60\%$ in a more realistic simulation.
Our naive algorithm requires a median of $549$~BGP Poisoning steps ($91.5$~hours) for traceback, while our graph-based algorithm improves this to $98.5$~steps ($16.4$~hours), with $25\%$ of cases even terminating in at most $29$~steps ($4.8$~hours).
An 8-fold parallelization of our methodology reduces the median traceback duration to less than an hour.

In summary, our contributions are the following:
\begin{enumerate}
	\item We propose a novel approach for AS-level traceback of IP spoofing by leveraging BGP Poisoning.
		Our approach requires neither cooperation of external parties nor a priori knowledge about the attacker.
	\item We present two traceback algorithms, showing that BGP-based traceback is feasible in principle and can be greatly sped up when augmented with AS relationship data.
	\item We provide an extensive simulation-based evaluation, measuring the influence of various parameters on performance and correctness.
		We confirm our simulator through real-world experiments using the PEERING BGP testbed~\cite{schlinker2019peering} and RIPE Atlas~\cite{ripeatlas}.
\end{enumerate}

\begin{figure*}[t]
 \centering
 \begin{tikzpicture}[
   as/.style={circle,draw,minimum size=7mm,align=center}]

  \begin{pgfonlayer}{main}
   \def\xhoney1{6}
   \def\xlabel{8}

   \node [as] (AS0) at (0,-1.5) {$\bf V$};
   \node [as, text=gray] (AS2) at (1.25,-0.75) {$H$};
   \node [as, text=gray] (AS1) at (1.25,0.75) {$G$};

   \node [as, text=gray] (AS3) at (0,0) {$F$};
   \node [as, text=gray] (AS4) at (0,1.5) {$E$};

   \node [as, text=gray] (AS6) at (-1.25,-0.75) {$D$};

   \node [as, text=gray] (AS7) at (-1.25,0.75) {$C$};
   \node [as, text=gray] (AS8) at (-1.25,2.25) {$B$};

   \node [as] (AS9) at (-2.5,1.5) {$\bf A$};

   \node [label=below:BGP Router] (bgp) at (3.75,0) {\includegraphics{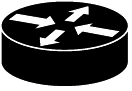}};

   \node [] (honeypot-probe-1-2) at ($(\xhoney1, 2)+(0.2,0.2)$) {\includegraphics[scale=0.75]{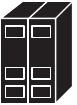}};
   \node [] (honeypot-probe-1-1) at (\xhoney1, 2) {\includegraphics[scale=0.75]{fig/standard_host}};

   \node [] (honeypot-probe-n-2) at ($(\xhoney1, 0.5)+(0.2,0.2)$) {\includegraphics[scale=0.75]{fig/standard_host}};
   \node [] (honeypot-probe-n-1) at (\xhoney1, 0.5) {\includegraphics[scale=0.75]{fig/standard_host}};

   \node [] (honeypot-control-2) at ($(\xhoney1, -1)+(0.2,0.2)$) {\includegraphics[scale=0.75]{fig/standard_host}};
   \node [label={[name=label,align=center] below:{Amplification\\Honeypots}}] (honeypot-control-1) at (\xhoney1, -1) {\includegraphics[scale=0.75]{fig/standard_host}};

  \node [label=center:{$\vdots$}] (probe-dots) at (\xhoney1,1.45) {};
  \node [label=center:{$\vdots$}] (net-probe-dots) at (\xlabel,1.35) {};

   \node [align=center] (net-probe-1) at (\xlabel,2) {Probe Prefix 1\\\texttt{X.Y.1.0/24}};
  \node [align=center] (net-probe-n) at (\xlabel,0.5) {Probe Prefix $n$\\\texttt{X.Y.n.0/24}};
  \node [align=center] (net-control) at (\xlabel,-1) {Control Prefix\\\texttt{X.Y.0.0/24}};

  \draw (AS1) -- (bgp);
  \draw (AS2) -- (bgp);
  \draw (AS3) -- (AS1);
  \draw (AS3) -- (AS2);
   \draw [name path=e-g] (AS4) -- (AS1);
  \draw (AS6) -- (AS3);
  \draw (AS7) -- (AS4);
  \draw (AS8) -- (AS4);
  \draw (AS9) -- (AS7);
  \draw (AS0) -- (AS2);

  \draw (bgp) -- (honeypot-probe-1-1);
  \draw (bgp) -- (honeypot-probe-n-1);
  \draw (bgp) -- (honeypot-control-1);

   \path[name path=cut, save path=\cut] (AS4) ++(280:7mm) arc [start angle=280, end angle=380, radius=7mm];

   \draw [->, >=stealth, red, line width=3pt, opacity=0.4, name intersections={of=cut and e-g}] plot [smooth] coordinates {(intersection-1) ($(AS1) + (0,0.2)$) ($(bgp) + (0.0,0.4)$) ($(honeypot-probe-1-1.210) + (0,0.1)$)};
   \draw [-.>, >={Rays[]}, red, line width=3pt, opacity=0.8, name intersections={of=cut and e-g}, shorten >=-2.2mm] plot [smooth] coordinates {(AS9.330) ($(AS7) + (0,0.2)$) ($(AS4) + (0,-0.2)$) (intersection-1)};
   \draw [-.>, >=stealth, red, line width=3pt, dash pattern=on 15pt off 7pt, opacity=0.4] plot [smooth] coordinates {($(honeypot-probe-1-1.230) + (0,-0.1)$) ($(bgp) + (0,-0.2)$) ($(AS2.south)$) (AS0.10)};

   \draw[line width=2pt, black, densely dashed] [use path=\cut];
  \end{pgfonlayer}

 \begin{pgfonlayer}{background}
  \node [draw=gray, fit=(bgp) (honeypot-probe-1-1) (honeypot-probe-n-1) (honeypot-control-1) (label) (net-probe-1) (net-probe-n) (net-control), inner sep=0.25cm, dashed, fill=gray, fill opacity=0.1, rounded corners=0.16cm] (bgpeekaboo) {} ;
    \node [opacity=0.5, yshift=-0.2cm,anchor=west] at (bgpeekaboo.north west) {\toolname};
   \end{pgfonlayer}

 \end{tikzpicture}
 \caption{\toolname overview. After poisoning $E$, the attack towards $V$ is no longer observed at the amplification honeypots and must therefore originate from either $A$, $B$, $C$, or $E$.}\label{fig:system}
\end{figure*}

\section{Background}\label{sec:background}
In this section, we give a brief recap on amplification attacks and BGP Poisoning.

\subsection{Amplification DDoS Attacks}
In an amplification DDoS attack, the attacker tricks public UDP services (e.g., DNS servers) into sending large amounts of traffic to the victim.
This is possible, because IP spoofing enables an attacker to spoof the source address of packets to be the address of the \emph{victim}.
The service will then perceive this packet as a legitimate request and respond to the victim (making the service an involuntary \emph{reflector}).
By carefully selecting reflectors that send large responses, the attacker can maximize the traffic that is reflected to the victim and achieve traffic \emph{amplification}.
IP spoofing further hides the attacker's (network) location, which makes finding the attacker behind an amplification attack notoriously difficult.

Fortunately, amplification attacks have been identified to largely be launched from \emph{single} sources such as Booter services~\cite{kramer2015amppot,krupp2017linking}.
For attackers, finding such reliable and capable source infrastructures is challenging.
Consequently, these infrastructures are usually reused over long time spans.
These services have also been reported to reuse the same set of reflectors for multiple attacks over an extended period of time~\cite{krupp2017linking}.
We can therefore assume that reflectors continuously\footnote{As we will show in \autoref{sec:eval:realworld:activity}, even though individual attacks might be too short for BGP-based traceback, we can aggregate multiple attacks to the same origin, thereby fulfilling this demand.} receive spoofed attack traffic from the same origin.
\subsection{The Border Gateway Protocol (BGP)}
The Internet is often described as a ``network of networks'', as it comprises thousands of so-called \emph{autonomous systems} (ASes).
Every AS is a network under the control of a single entity.
An AS is usually responsible for a number of IP prefixes and can be identified by its unique \emph{AS number} (ASN).

BGP~\cite{rfc4271} enables routing between ASes.
In BGP, ASes exchange routing information with their neighbors through route advertisements (also called announcements).
Each route advertisement describes a path of ASes (\texttt{AS\_PATH}) via which a certain IP prefix may be reached.
When a BGP router receives an advertisement for a prefix, it first checks if it already knows a \emph{better}\footnote{according to its operator defined policies} route for that prefix.
If it does, the new route is only kept as a fallback.
Otherwise, the router prepends its own ASN to the \texttt{AS\_PATH} and advertises this new route to its neighbors.
Between two BGP routers, a new advertisement for a prefix also implicitly withdraws the old route advertised for that prefix.
When routing traffic, packets are forwarded according to the best known route for the most specific prefix covering the traffic's destination.
We assume that the attacker does not control an entire AS, but is a customer of an RFC-compliant AS.

\subsubsection{BGP Poisoning:}\label{sec:background:poisoning}
BGP detects and prevents loops~\cite{rfc4271}.
Before considering new advertisements, routers check if their own ASN is already included in the \texttt{AS\_PATH}.
If so, the new advertisement will be considered as a withdrawal only and no longer propagated to the AS's neighbors.
A side-effect of loop detection is that it enables \emph{BGP Poisoning}.
By crafting the \texttt{AS\_PATH} to include other systems' ASNs, loop detection can intentionally be triggered at these other systems.
Specifically, to trigger loop detection at ASes $X_1, \dots, X_n$, an AS $A$ may send an advertisement with
\[
	\texttt{AS\_PATH} = \left(A, X_1, \dots, X_n, A\right)
\]
If any AS $\in X_1, \dots, X_n$ receives this advertisement, it will find itself already present in the \texttt{AS\_PATH}, consider this a ``loop'', and handle it as a withdrawal subsequently.
The first $A$ ensures that $A$'s neighbors correctly see $A$ as the next on-path AS, while the last $A$ ensures that the prefix is still seen as originating from $A$ (as required, e.g., for Route Origin Validation~\cite{rfc6480,rfc6482}).
We will call $A$ the \emph{poisoning AS} and $X_1$ through $X_n$ the \emph{poisoned ASes}.
Despite its negative name, BGP Poisoning does not imply malevolence---after all, dropping advertisements can only impede reachability of the \emph{poisoning} AS, but not others.
On the contrary, since it gives operators a way to control \emph{in}bound traffic paths, its utility has been proven for many traffic engineering tasks~\cite{katz2011machiavellian,katz2012lifeguard,smith2018routing}.

\section{BGP-based Traceback}\label{sec:traceback}
As noted in \autoref{sec:background:poisoning}, BGP Poisoning can be used to discard route advertisements at other ASes.
In this section we show how the resulting side-effects can be leveraged to find the origin AS of spoofed attack traffic and present our traceback system \toolname.

\subsection{Poisoning for Traceback}\label{sec:traceback:poisoning}
Assume an AS $A$ is sending spoofed traffic to a reflector and receives a poisoning advertisement for the reflector's prefix.
Since the AS will handle this advertisement like a withdrawal, it will remove its routing information for that prefix.
However, without routing information it can no longer send traffic to the reflector.
The reflector will hence stop receiving traffic from $A$.

A similar observation can be made for ASes \emph{forwarding} traffic to the reflector.
Assume an AS $F$ is normally forwarding spoofed traffic from $A$ to the reflector and receives a poisoning advertisement.
Next to losing the ability to send traffic to the reflector, $F$ also has to withdraw any routes for that prefix it had advertised to its neighbors.
Thus $A$ can no longer route traffic to the reflector via $F$.
If no alternative path from $A$ to the reflector exists that circumvents $F$, then $A$ again loses connectivity and traffic at the reflector will stop.
However, even if an alternative path exists, this case might be observable at the reflector:
Unless the IP hop count along both paths is exactly the same, the traffic's TTL value at the reflector will change.

This allows us to check whether some AS $X$ was \emph{on-path} of a traffic flow:
If poisoning $X$ causes the traffic to stop or its TTL value to change, $X$ \emph{was} on-path.
Furthermore, if traffic stopped, the origin AS has no alternative routes avoiding the poisoned AS $X$.

\subsubsection{Default Routes}\label{sec:traceback:defaultroutes}
In practice, some ASes can still route traffic even for prefixes that they have no explicit routing information for.
These \emph{default routes} can be either configured statically or an upstream AS can advertise itself as the next hop for a large prefix (e.g., \texttt{0.0.0.0/0}).
In these cases, the poisoned AS will not loose connectivity entirely, but switch to its default route.
As before, such a route change can result in an observable TTL change.
Furthermore, since a default route cannot be advertised with a more specific prefix than the original route, BGP will avoid paths including a default route whenever possible.

\subsubsection{Combined Probes}\label{sec:traceback:parallel}
The on-path check can also be performed for multiple ASes simultaneously using a combined poisoning advertisement.
If the poisoning of an AS leads to a stop in traffic, poisoning additional ASes cannot undo this effect.
If, however, it leads to a TTL change only, poisoning additional ASes may cause further re-routing or even eliminate alternative paths.
While this could negate the TTL change under specific circumstances\footnote{if it causes traffic to take a path with the exact same IP hop count as the original path}, it results in additional traffic stops in many cases.

Ultimately though, a combined probe can only tell whether \emph{any} of the poisoned ASes were on-path.
To find the exact on-path AS within the probed set, we can use a binary search approach, iteratively splitting the probing set in half and repeating the probing for each half.

\subsubsection{Active Measurements}\label{sec:traceback:active}
This technique can be further supplemented by active measurements.
By provoking replies from a host in the measured AS (e.g. through ICMP Pings or TCP SYNs\footnote{suitable candidates and ports could be found through Internet-wide scanning or by leveraging a search engine such as Shodan\cite{shodan}}), we can observe the effect of poisoning on the AS.
If the replies to our active measurements stop under poisoning, but the spoofed traffic continues (or vice versa), we can conclude that the measured AS was \emph{not} the spoofing source.

\subsubsection{Probes in \toolname}
\toolname uses all but one of its prefixes as \emph{probe prefixes} to probe network responses to poisoning advertisements.
The remaining prefix is designated as the \emph{control} prefix and will only be advertised regularly (non-poisoned).
We will refer to honeypots in probe prefixes and the control prefix as \emph{probe honeypots} and \emph{control honeypots} respectively.

To probe an AS, the BGP router of \toolname sends a poisoned route advertisements for a probe prefix that receives attack traffic.
Once the routes have stabilized, it records the impact on the attack traffic and on pings.
If the attack is also observed by some control honeypots, impact can also be measured by comparing traffic between the probe and control honeypots.

\section{A naive Traceback Approach}\label{sec:traceback:naive}
Using these measures we propose the naive traceback algorithm depicted in \autoref{alg:naive}.
The algorithm simply loops over all ASes (set $\allAS$) in chunks of size at most $n$ and considers each chunk as a combined probe $\probe$.

The actual probing is performed by \textsc{Probe}, which sends out a poisoned advertisement for the ASes in $\probe$, performs the active measurements, and returns the observed effects.
It returns the effect on the attack traffic~($\rp$) and the active measurement results~($\ra$).
$\rp$ can be either $\noeffect$ if no change was observed, $\ttlchange$ if we observed a TTL change, or $\trafficstop$ if traffic has stopped entirely.
Similarly, $\ra$ is a vector with one component (either $\noeffect$ or $\trafficstop$) per probed AS.

If poisoning of $\probe$ shows an effect on the attack traffic, the probe is then narrowed down to find the exact AS(es) that caused this effect.
For this, first, all probed ASes that show an inconsistent behavior in their \emph{active} measurements are discarded from the probe (\textsc{Update}).
If this already reduces the probe to a single consistent AS, it is then added to the candidate set $\candidates$ as a confirmed on-path AS.
Otherwise, the probe is split in half (\textsc{Split}) and the probing is repeated for each half recursively.
Once all ASes have been probed that way, the final set of confirmed on-path ASes $\candidates$ is returned.

\begin{figure}[!t]
  \begin{algorithmic}
    \Procedure{NaiveTraceback}{$\allAS, A, n$}
      \State $\candidates \gets \emptyset$ \Comment{candidates}
      \For{block $\probe$ in $\allAS$ of size $\leq n$}
        \State \Call{ProbeAndUpdate}{$\probe$}
      \EndFor
      \State \Return $\candidates$
    \EndProcedure
    \Statex
    \Procedure{ProbeAndUpdate}{$\probe$}
      \State $\rp, \ra \gets$ \Call{Probe}{$\probe$}
      \If{$\rp \neq \noeffect$}
	\State $\probe \gets$ \Call{Update}{$\probe, \rp, \ra$}
	\If{$\left|\probe\right| = 1$}
	  \State $\candidates \gets \candidates \cup \probe$ \Comment{AS in $\probe$ was on-path}
	\ElsIf{$\left|\probe\right| \geq 2$}
	  \State $\probe_1, \probe_2 \gets$ \Call{Split}{$\probe$} \Comment{``Binary Search''}
	  \State \Call{ProbeAndUpdate}{$\probe_1$}
	  \State \Call{ProbeAndUpdate}{$\probe_2$}
	\EndIf
      \EndIf
    \EndProcedure
    \Statex
    \Procedure{Update}{$\probe, \rp, \ra$}
      \If{$\rp = \trafficstop$}
        \State $\exclude \gets \left\{X \in \probe\mid\ra[X] \neq \trafficstop\right\}$
      \Else
        \State $\exclude \gets \left\{X \in \probe\mid\ra[X] = \trafficstop\right\}$
      \EndIf
       \State \Return $\probe \setminus \exclude$
    \EndProcedure
 \end{algorithmic}
 \caption{Naive traceback algorithm}\label{alg:naive}
\end{figure}

\subsection{Runtime Analysis}\label{sec:traceback:naive:runtime}
The runtime of this traceback approach is greatly dominated by the probing step, which involves sending out a poisoned advertisement and performing active measurements.
This is because (1) it may take several minutes for routes to settle after a new advertisement~\cite{labovitz2000delayed,katz2012lifeguard}, only after which active measurements can be performed, and (2) several BGP mechanisms further limit the rate at which routers may send out new advertisements (\autoref{sec:discussion:bgp}).
Therefore, realistically, advertisements cannot be made much faster than once every ten minutes.

We will thus count the number of required probing steps to analyze the traceback runtime.
At chunk size $n$ and a total of $N$ ASNs in $\allAS$, the naive traceback algorithm from \autoref{alg:naive} takes $\left\lceil\frac{N}{n}\right\rceil$ steps to test each chunk once, plus an additional $2\log_2 n$ steps for every on-path AS to reduce the chunk it is contained in down to a single AS.
For example, the AS65000 BGP Routing Table Analysis Report~\cite{potaroo} lists roughly $66000$ active ASes for the end of 2019 and an average AS path length of $5.5$.
With a chunk size of $n=128$ this thus gives an average of $593$ steps total, which, at 6 advertisements per hour, would take about 4 days and 3 hours to complete.

Given that spoofing sources, such as Booter services, are active for extended periods of time (see \autoref{sec:eval:realworld:activity}) and reuse the same amplifiers for a week or longer~\cite{krupp2017linking}, this shows that BGP-based traceback is feasible in principle.

\subsection{Discussion}

While this algorithm is intuitive and requires no external knowledge of AS properties or relationships, it comes with two main drawbacks:
\begin{enumerate*}[label=(\arabic*)]
 \item It only returns an unordered set of on-path ASes, which still leaves the exact path and origin unknown.
 \item It ``wastes'' a lot of time poisoning off-path ASes that could potentially be avoided, as it effectively conducts an exhaustive search over the entire AS space.
\end{enumerate*}

\subsubsection{On-Path Ordering}
Ideally, one would like to find the true origin AS of the spoofed traffic, or at least the on-path AS closest to it that can still be discovered.
However, through probing we can only tell \emph{whether} an AS was on-path or not, but not its position along the path.

While at first glance it seems that this problem could be solved by comparing TTL values received from hosts located in these ASes, this is not necessarily true:
Although the AS level path should be the same for both traffic originating from and traffic forwarded by an AS, the IP level paths (and hence their hop counts) can differ vastly.
In a similar vein, one might attempt to infer the on-path order from traceroutes towards hosts in the candidate ASes, mapping the obtained IP level traceroute paths into AS level paths.
Barring the problems of mapping IP to AS level paths~\cite{zhang2010quantifying, hyun2003traceroute}, traceroutes from the vantage point can only reveal paths \emph{towards} other hosts, but not their reverse paths, which we are interested in.

\subsubsection{Runtime Improvement}
Although the estimated runtime does not seem prohibitive, the question still remains whether additional knowledge about ASes, such as their relationships, can be used to achieve effective traceback more efficiently.
For example, as a simple optimization the search can be aborted as soon as a stub AS is confirmed to be on-path.
Since stub ASes do not provide transit for other ASes, an on-path stub AS must be the one originating the observed traffic.
In these cases, such an early termination will reduce the expected runtime to $1/2$ of the original algorithm.
In the next section, we show how the runtime can be further improved by leveraging AS relationship data as well as the information about alternative path availability one can obtain from probing.

\section{Flow Graph based Traceback}\label{sec:graph}
Our graph-based traceback algorithm exploits knowledge about the relationship between ASes to limit the search space.
For example, if the attack traffic stops under poisoning, we know that the source has no alternative route that avoids the poisoned ASes.
To efficiently reason about alternative paths and whether an advertisement might be propagated from one AS to another, we define a so-called \emph{AS Flow Graph}.

\subsection{AS Flow Graphs}\label{sec:graph:flowgraph}
AS relationships are usually classified as either \rCPv (\rCP) or \rPPv (\rPP) relations~\cite{gao2001inferring}.
In a \rCPv relation, one AS (the customer) pays another (the provider) for transit such that the customer may reach and be reached from the Internet via the provider.
In a \rPPv relation, two ASes agree to transit traffic for their customers to one another, thereby reducing the amount of traffic they would have to pay their provider for otherwise.

The resulting BGP paths are generally assumed to be valley-free~\cite{gao2001inferring}: zero or more \rCPv edges (``up''), followed by at most one \rPPv edge (``sideways''), followed by zero or more \rPCv edges (``down'').
In other words: a \rPPv or \rPCv edge can never be followed by a \rCPv or \rPPv edge, as this would result in a ``valley''.
Note that this property is symmetric and holds for both, the AS-level paths taken by routed traffic as well as the propagation paths of BGP advertisements.

Following this, an AS may receive an advertisement in two states:
If the advertisement was received from a customer (i.e., via a \rCPv edge), the valley-free assumption does not restrict the edge types that may follow.
We will call this state \unconstrained.
If an advertisement was received from either a peer or a provider, it may only be forwarded to customers, but not to other peers or providers.
We will thus call this state \constrained.

We can use a graph to model advertisement propagation that uses \emph{two} nodes per AS, one for each state.
Formally, we define this graph $G = (V, E)$ as follows:
Every AS $A$ is represented by two vertices, $u_A$ and $c_A$, representing the \unconstrained and \constrained state respectively,
\[
	V = \bigcup\limits_{A \in \AS} \left\{ u_A, c_A \right\}
\]
We will denote the AS represented by a node x using $\asn(x)$,
\[
 \asn(x) = A \Leftrightarrow x \in \left\{u_A, c_A\right\}
\]
When AS $A$ is a provider of AS $B$, advertisements may only be forwarded ``downhill'' from $c_A$ to $c_B$ or ``uphill'' from $u_B$ to $u_A$.
If $A$ and $B$ share a \rPP relation, then advertisements may only be forwarded between $A$ and $B$ at the ``peak'' of the path, i.e., from $u_A$ to $c_B$ or from $u_B$ to $c_A$.
Finally, advertisements received at $u_A$ may of course also be forwarded to $c_A$.
In total,
\begin{align*}
	E = &\left(\bigcup\limits_{A,B \in \CP} \left\{ (u_A, u_B), (c_B, c_A)\right\}\right)\\
	&\cup \left(\bigcup\limits_{A,B \in \PP} \left\{ (u_A, c_B), (u_B, c_A) \right\} \right)\\
	&\cup \left(\bigcup\limits_{A \in \AS} \left\{(u_A, c_A)\right\} \right)
\end{align*}

This graph then captures all valley-free AS paths, and every path in this graph corresponds to a valid valley-free AS path.
For a formal proof, please refer to \autoref{app:graph:proof}.

By construction, the orientation of edges denotes the direction of advertisement propagation.
For example, an edge from $u_B$ to $u_A$ implies that advertisements received by $B$ may be further propagated to $A$.
However, the inverse graph obtained by reversing all edges is meaningful as well, as it describes all possible traffic flows between ASes:
If AS $B$ advertises a route for a prefix to AS $A$, then $A$ may send traffic destined towards that prefix to $B$.

In the example given in \autoref{fig:flowgraph}, $A$ is a provider of $B$, $C$ is a provider of $D$, and $B$ and $C$ have a peer-to-peer relation.
However, it is not obvious from the relationship graph, whether advertisements from $D$ may reach $A$.
The resulting flowgraph answers this unequivocally:
Advertisements from $D$ can reach $C$ ($u_D \rightarrow u_C$) and $B$ ($u_D \rightarrow u_C \rightarrow c_B$), but cannot reach $A$, since there is no directed path from either $u_D$ or $c_D$ to $u_A$ or $c_A$.

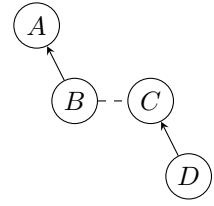
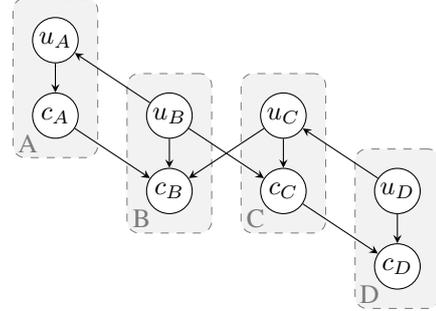
\begin{figure}[t]
 \centering
 \begin{subfigure}[t]{\columnwidth}
 \centering
 \begin{tikzpicture}[]

  \node [circle,draw,minimum size=6mm,label=center:$A$] (a) at (-0.5,0) {};
  \node [circle,draw,minimum size=6mm,label=center:$B$] (b) at (0,-1) {};
  \node [circle,draw,minimum size=6mm,label=center:$C$] (c) at (1,-1) {};
  \node [circle,draw,minimum size=6mm,label=center:$D$] (d) at (1.5,-2) {};

  \draw[->,>=stealth] (b) -- (a);
  \draw[dashed] (b) -- (c);
  \draw[->,>=stealth] (d) -- (c);
 \end{tikzpicture}
 \caption{Example AS relationships: $A$ is provider of $B$, $C$ is provider of $D$, and $B$ and $C$ have a peer-to-peer relation.}
 \end{subfigure}

 \begin{subfigure}[t]{\columnwidth}
 \centering
 \begin{tikzpicture}[]

  \begin{pgfonlayer}{main}
  \node [circle,draw,minimum size=6mm,label=center:$u_A$, fill=white] (u-a) at (-1.5,0.5) {};
  \node [circle,draw,minimum size=6mm,label=center:$c_A$, fill=white] (c-a) at (-1.5,-0.5) {};
  \node [circle,draw,minimum size=6mm,label=center:$u_B$, fill=white] (u-b) at (0,-0.5) {};
  \node [circle,draw,minimum size=6mm,label=center:$c_B$, fill=white] (c-b) at (0,-1.5) {};
  \node [circle,draw,minimum size=6mm,label=center:$u_C$, fill=white] (u-c) at (1.5,-0.5) {};
  \node [circle,draw,minimum size=6mm,label=center:$c_C$, fill=white] (c-c) at (1.5,-1.5) {};
  \node [circle,draw,minimum size=6mm,label=center:$u_D$, fill=white] (u-d) at (3,-1.5) {};
  \node [circle,draw,minimum size=6mm,label=center:$c_D$, fill=white] (c-d) at (3,-2.5) {};

  \draw[->,>=stealth] (u-a) -- (c-a);
  \draw[->,>=stealth] (u-b) -- (c-b);
  \draw[->,>=stealth] (u-c) -- (c-c);
  \draw[->,>=stealth] (u-d) -- (c-d);

  \draw[->,>=stealth] (u-b) -- (u-a);
  \draw[->,>=stealth] (c-a) -- (c-b);

  \draw[->,>=stealth] (u-d) -- (u-c);
  \draw[->,>=stealth] (c-c) -- (c-d);

  \draw[->,>=stealth] (u-b) -- (c-c);
  \draw[->,>=stealth] (u-c) -- (c-b);
  \end{pgfonlayer}

   \begin{pgfonlayer}{background}
    \node [draw=gray, fit=(u-a) (c-a), inner sep=0.25cm, dashed, fill=gray, fill opacity=0.1, rounded corners=0.16cm] (a) {} ;
    \node [opacity=0.5, xshift=0.2cm, yshift=0.2cm] at (a.south west) {A};
    \node [draw=gray, fit=(u-b) (c-b), inner sep=0.25cm, dashed, fill=gray, fill opacity=0.1, rounded corners=0.16cm] (b) {} ;
    \node [opacity=0.5, xshift=0.2cm, yshift=0.2cm] at (b.south west) {B};
    \node [draw=gray, fit=(u-c) (c-c), inner sep=0.25cm, dashed, fill=gray, fill opacity=0.1, rounded corners=0.16cm] (c) {} ;
    \node [opacity=0.5, xshift=0.2cm, yshift=0.2cm] at (c.south west) {C};
    \node [draw=gray, fit=(u-d) (c-d), inner sep=0.25cm, dashed, fill=gray, fill opacity=0.1, rounded corners=0.16cm] (d) {} ;
    \node [opacity=0.5, xshift=0.2cm, yshift=0.2cm] at (d.south west) {D};
   \end{pgfonlayer}
 \end{tikzpicture}
 \caption{The resulting AS flowgraph shows that route advertisements from $D$ may reach $C$ and $B$, but can never reach $A$.}
 \end{subfigure}
 \caption{AS flowgraph example}\label{fig:flowgraph}
\end{figure}

\subsubsection{Reachability and Dominance}
A poisoned advertisement can affect ASes in two ways:
ASes included in the advertisements' \texttt{AS\_PATH} are affected directly, as they will discard the advertisement due to loop detection.
However, this also prevents them from propagating the advertisement further, which, in turn, can affect other ASes.
While some of these indirectly affected ASes may still receive the advertisement via alternative routes and thus only experience a route change, others may no longer be able to receive the advertisement at all.
To reason about the propagation of (poisoned) advertisements originating from a specific AS $A$, we can define its AS-specific subgraph:
\begin{definition}[AS-specific subgraph]
 For the flow graph $G$ and AS $A$, we define the AS-specific rooted subgraph $G_A = (V_A, E_A, u_A)$ as the subgraph of $G$ rooted at $u_A$.
\end{definition}
We can then define two relations on this graph, reachability and (joint) dominance, to capture which ASes \emph{might} potentially be affected by a poisoned advertisement and which \emph{will} be inevitably.

Consider an advertisement that is poisoning ASes ${P = \{X_1, \dots, X_n\}}$ (with corresponding nodes ${p = \{u_{X_1}, c_{X_1}, \dots, u_{X_n}, c_{X_n}\}}$).
Another AS $Y$ might only be affected by this advertisement if it can receive advertisements from $A$ via any AS in $P$\footnote{Under specific circumstances, poisoned advertisements may also induce route changes at ASes that have no connection to a poisoned ASes. We analyze these cases in \autoref{sec:discussion:induced}.}.
As all paths in the graph $G_A$ describe valid propagation paths for advertisements originating from $A$, AS $Y$ therefore might be affected if there is a path from any node $x \in p$ to any node $y \in \{u_Y, c_Y\}$.
\begin{definition}[Reachability]
	We call a node $y$ \emph{reachable} from another node $x$, iff there exists a path in $G_A$ from $x$ to $y$.
	\begin{align*}
		&y \in \reachable_{G_A}(x) \Leftrightarrow\\
		&\quad\exists \nodepath = (x_1 = x, \dots, x_n = y) :\\
		&\quad\quad\forall 1 \leq i < n : (x_i, x_{i+1}) \in E_A
	\end{align*}
\end{definition}

The set $\reachable_{G_A}(x)$ describes all nodes whose traffic towards $A$ \emph{may} be routed via $x$.
The definition of $\reachable$ can be trivially extended to sets of nodes by taking their union:
\[
	\reachable_{G_A}(\{x_1, \dots, x_n\}) = \bigcup\limits_{i=1}^n \reachable_{G_A}(x_i)
\]
Therefore, the set of ASes that \emph{might} be affected by an advertisement poisoning nodes $p$ is simply $\reachable_{G_A}(p)$.
That is, if poisoning of nodes $p$ causes the TTL values of the attack traffic to change we can infer that the origin was affected and must be an element of $\reachable_{G_A}(p)$.

In a similar vein we can also define nodes that \emph{must} be affected.
Intuitively, a node $y$ must be affected if it cannot receive advertisements from $A$ but via nodes in $p$.
Thus, once all nodes in $p$ are removed from the graph, $y$  should be no longer reachable from $u_A$.
In graph theory terms $p$ then constitutes a \emph{$u_A$-$y$-vertex-separator} (albeit defined over directed graphs), but can also be seen as a generalization of the concept of \emph{dominators} from control-flow-graph analysis.
\begin{definition}[(Joint) Domination]
	We call a node $y$ (jointly) \emph{dominated} by a set of nodes $\{x_1, \dots, x_n\}$ in graph $G_A$, iff $y$ is reachable from $u_A$ but removing the set $\{x_1, \dots, x_n\}$ breaks reachability from $u_A$ to $y$.
	Formally
	\begin{align*}
		&y \in \dominatees_{G_A}(\{x_1, \dots, x_n\}) \Leftrightarrow\\
		&\quad y\in \reachable_{G_A}(u_A) \land y \not\in \reachable_{G'_A}(u_A)
	\end{align*}
	where $G'_A = (V'_A, E'_A, u_A)$ with ${V'_A = V_A \setminus \{x_1, \dots, x_n\}}$ and ${E'_A = \left\{(x, y) \in E_A \mid x \in V'_A \land y \in V'_A \right\}}$.
\end{definition}

Hence, the set of ASes that \emph{will} inevitably be affected, i.e., those that will have to switch to default routes or possibly loose connection, by an advertisement poisoning nodes $p$ is $\dominatees_{G_A}(p)$.
As noted above, other ASes from $\reachable_{G_A}(p)$ might be affected as well, but for those contained in $\dominatees_{G_A}(p)$ we can be certain.

\subsection{Flow Graph based Traceback}\label{sec:graph:traceback}
We can leverage this flow graph for AS traceback in two ways:
First, given a known on-path AS we know that the neighboring on-path AS must also be one of its successors in the graph.
Instead of globally searching for the origin AS we can therefore iteratively find all on-path ASes one-by-one by probing the successors of the latest found on-path AS.

Second, we can use reachability and dominance relations given by the graph to reduce our probing search space:
If traffic stops, we know that the origin AS has no alternative paths available and can thus limit our search space to nodes dominated by the probed ASes (which includes the probed ASes themselves).
If we observe a TTL change, we can still infer that the origin AS is at least indirectly affected by the probe and can therefore limit our search space to those nodes that are reachable from the probed ASes.
We can also use similar inferences on the results of our active measurements:
When responses to active measurements for a probed AS stop but the attack traffic does not, we can exclude all nodes dominated by the probed AS from our search space, as those would no longer be able to send traffic as well.
Vice versa, if the attack traffic stops but active measurements show that a probed AS is still replying to pings, we can exclude all nodes reachable by the probed AS, as they also could send traffic to us via the probed AS.

\subsubsection{Graph based Traceback Algorithm}\label{sec:graph:traceback:algorithm}
Combining both of these effects leads to the graph-based traceback algorithm shown in \autoref{alg:traceback}.
Given the rooted subgraph ${G_A = (V_A, E_A, u_a)}$ of the receiving AS $A$ and a probe size limit $n$, it returns a set of candidate source ASes.

\begin{figure}[!t]
 \begin{algorithmic}
    \Procedure{Traceback}{$G_A, n$}
      \State $\candidates \gets \left\{\asn(v)\mid v \in V_A\right\})$ \Comment{candidates}
      \State $\logbook \gets \emptyset$ \Comment{logbook}
      \State $\last \gets A$ \Comment{most recent on-path AS}
      \While{$\candidates \setminus \logbook \neq \emptyset$}
        \State $\probe \gets$ \Call{PickProbe}{$\candidates, \logbook,  \last, n$}
	\State $\logbook \gets \logbook \cup \probe$
        \State \Call{ProbeAndUpdate}{$\probe$}
      \EndWhile
      \State \Return $\candidates$
    \EndProcedure
    \Statex
    \Procedure{ProbeAndUpdate}{$\probe$}
      \State $\rp, \ra \gets$ \Call{Probe}{$\probe$}
      \State \Call{UpdatePassive}{$\probe, \rp$}
      \State \Call{UpdateActive}{$\probe, \rp, \ra$}
      \If{$\rp \neq \noeffect$}
	\If{$\left|\probe\right| = 1$}
	  \State $\last \gets X \in \probe$ \Comment{AS $X$ was on-path}
	\ElsIf{$\left|\probe\right| \geq 2$}
	  \State $\probe_1, \probe_2 \gets$ \Call{Split}{$\probe$} \Comment{``Binary Search''}
	  \State \Call{ProbeAndUpdate}{$\probe_1$}
	  \State \Call{ProbeAndUpdate}{$\probe_2$}
	\EndIf
      \EndIf
    \EndProcedure
    \Statex
    \Procedure{UpdatePassive}{$\probe, \rp$}
      \If{$\rp = \trafficstop$}
        \State $\candidates \gets \candidates \cap \dominatees_{G_A}(\probe)$
      \ElsIf{$\rp = \ttlchange$}
          \State $\candidates \gets \candidates \cap \reachable_{G_A}(\probe)$
      \EndIf
    \EndProcedure
    \Statex
    \Procedure{UpdateActive}{$\probe, \rp, \ra$}
      \If{$\rp = \trafficstop$}
        \State $\exclude \gets \left\{X \in \probe\mid\ra[X] \neq \trafficstop\right\}$
        \State $\candidates \gets \candidates \setminus \reachable{G_A}(\exclude)$
      \Else
        \State $\exclude \gets \left\{X \in \probe\mid\ra[X] = \trafficstop\right\}$
        \State $\candidates \gets \candidates \setminus \dominatees_{G_A}(\exclude)$
      \EndIf
      \State $\probe \gets \probe \setminus \exclude$
    \EndProcedure
 \end{algorithmic}
 \caption{Flow Graph based traceback algorithm}\label{alg:traceback}
\end{figure}

For this, the algorithm maintains a set of source candidates $\candidates$, which is initially set to all ASes reachable from $A$ and then continuously narrowed down during traceback.
Further, it also keeps a logbook $\logbook$ of ASes that have already been probed and can thus be excluded from further probing, as well as the most recent on-path AS $\last$, which is initially set to $A$.
As long as there are candidates that have not been probed yet ($\candidates \setminus \logbook$), a new set of ASes to probe is selected (see \autoref{sec:graph:traceback:selection}) and passed to \textsc{ProbeAndUpdate}, which will perform the probing and update the candidate set and logbook accordingly.

As before, probing is performed by \textsc{Probe} (see \autoref{alg:naive}).
Updating the candidate set is performed in two steps:
First, \textsc{UpdatePassive} reduces the candidate set based on the overall effect on the attack traffic, limiting $\candidates$ to the set of nodes dominated or reachable by the current probe if traffic stopped or a TTL change was observed.
Second, \textsc{UpdateActive} further narrows down the candidate set and the current probe by finding probed ASes whose active measurements are inconsistent with the overall observation and then removing nodes dominated or reachable by these.

If there has been an effect on the attack traffic and multiple probed ASes exhibit a consistent behavior, they are split in half (\textsc{Split}) and the probing is repeated for each half.
If a probe has been narrowed down to a single consistent AS, this AS is set as the most recent on-path AS $\last$ and a new probe is selected.

\subsubsection{Probe Selection}\label{sec:graph:traceback:selection}
Probe Selection (\textsc{PickProbe}) picks a new probe based on the current candidates $\candidates$, the ASes that have been probed before $\logbook$, the last discovered on-path AS $\last$, and the maximum probe size $n$.
As discussed above, the next on-path AS must be a successor of $\last$.
We can thus partition our remaining search space of unprobed candidates $\candidates \setminus \logbook$ into ``layers'' according to their distance to the most recent on-path AS $\last$.
The next on-path AS should then be part of the nearest layer.
This ordering also ensures that, should we be unable to find the direct on-path successor of $\last$, e.g., because it exhibits no observable poisoning reaction, we only gradually expand our search scope to later on-path ASes.

Intuitively, to reduce the candidate set as fast as possible, we would like to first probe ASes that have a high impact on the candidate set.
While a stub AS could also have a high impact if it is on-path (effectively reducing the candidate set to a single entry), statistically speaking, in most cases probed ASes will be off-path ASes.
The majority of candidate set reductions will thus occur in \textsc{UpdateActive}.
We therefore rank ASes by the number of candidates that are \emph{reachable} through them, i.e., $\left|\reachable_{G_A}(X) \cap \candidates\right|$, and then select the $n$ highest ranking ASes as the next probe.

\section{Evaluation}\label{sec:evaluation}
We next turn to evaluate the runtime and success rate of \toolname and compare both proposed algorithms.
For this, we were fortunate enough to obtain a temporary ASN and a temporary /22 prefix allocations for research purposes from our regional Internet registry and were granted access to the PEERING BGP testbed~\cite{schlinker2019peering}.
However, PEERING's path length limit unfortunately proved prohibitive for any actual traceback runs of \toolname on the Internet:
Although a recent study showed that advertisements even with long paths of up to 255 hops are propagated to the vast majority of the Internet~\cite{smith2020withdrawing}, we were only able to send advertisements with up to 5 hops via PEERING.
Since the first and last of these also had to be set to our own temporary ASN, this left us with an effective probe size of 3.
Unfortunately, with this even a single run of our baseline approach would have taken over 157 days to complete.
We therefore resort to simulation for a comparative evaluation of our proposed traceback approaches and use the PEERING testbed for supporting experiments.

\subsection{Simulation Methodology}
As noted by previous works~\cite{schuchard2010losing, schuchard2012routing, smith2018routing}, a complete and fully accurate model of the entire Internet cannot be obtained, as it would require exact knowledge of peering agreements and router configurations, both of which are usually regarded as trade-secrets and thus generally non-public.
Simulation can therefore only be performed over approximate topology data, such as the one regularly published by CAIDA~\cite{caidaASrel}, and by making assumptions on router configurations.

Although the simulator by Smith et al.~\cite{smith2018routing} is thankfully publicly available, we found it unsuitable for our use-case as it does model neither default routes nor TTL values along paths.
Consequently, we designed our own lightweight simulator.

\subsubsection{Simulator Design}
Our simulator is based around the same AS flow graph described in \autoref{sec:graph:flowgraph}.
To model AS paths taken from a source to a destination, our simulator assigns every edge in the graph a random weight between 0 and 100 and then computes the shortest path.
This is in line with the regular BGP decision process~\cite[sec.~9.1]{rfc4271}, which generally prefers shorter AS paths.
The random edge weight hereby models the local preference value a network operator may set to prefer one link over another.
As the graph has \emph{two} nodes corresponding to each AS $X$, $u_X$ and $c_X$, we ensure that the edge $(u_X, c_X)$ is assigned the weight $0$, which also ensures that the resulting AS paths are loop free---a path containing both $u_X$ and $c_X$ can never be shorter than the one that uses the zero-weight $(u_X, c_X)$ edge.

The effect of a poisoning advertisement can then be simulated by temporarily removing the poisoned AS's nodes from the graph, such that they can no longer send traffic to the destination nor forward advertisements to their customers or peers.

In contrast to previous works, our simulator also attempts to model the presence of default routes, which were identified as a major culprit for discrepancies between simulations and experimental results in other BGP-based tools~\cite{smith2020withdrawing}.
For this, every AS is randomly marked as either \emph{having} a default route or not with certain probability.
When simulating a poisoning advertisement, those ASes with default routes are not removed from the graph, but instead increase the weights of their incoming edges by $10000$.
This ensures they are no longer selected as shortest paths, unless no alternative is available---as would be the effect of less-specific prefixes.
ASes with default routes also have a chance of using a secondary set of incoming edge weights when being poisoned, as otherwise their default route would always be identical to their regular route.

Finally, our simulator allows mapping AS paths to IP hop counts.
For this, every step along the AS path is assigned a random hop count value drawn from a negative binomial distribution, which we found to be a good fit after analyzing traceroute results from RIPE Atlas~\cite{ripeatlas} (see \autoref{sec:eval:realworld:ttl}).
In the real Internet, the IP-level path length also depends on which ingress and egress routers are taken.
We therefore make this random value also dependent on the preceding and succeeding AS hop.

\subsection{Results}\label{sec:eval:sim:results}
With this simulator, we conducted multiple experiments to compare our different traceback algorithms in terms of efficacy and efficiency as well as the influence of various parameters, such as the placement of the deployment location, the presence of default routes, and the choice of probe size.

As a baseline setup for our evaluation we placed the deployment location as a customer of the PEERING testbed (AS47065), which fosters comparability with our real-world experiments (\autoref{sec:eval:real}).
We set the default route probability to $40\%$ as a conservative approximation, given that Smith et al.~\cite{smith2020withdrawing} report a default route prevalence between $26.8\%$ and $36.7\%$ in their experiments.
Lastly, we picked $128$ as a conservative probe size, since they also report successful propagation of paths of length up to $255$~\cite{smith2020withdrawing}.
The flow graph used by our simulator was based on the public CAIDA AS relationship dataset for December 2019~\cite{caidaASrel}, which also covered the timeframe when our real-world experiments were performed, augmented by the peering relations listed by the PEERING testbed~\cite{schlinker2019peering}.

Every experiment was repeated $1024$ times, each time with a randomly chosen AS as the traffic's origin.
In addition, the simulator was also given a fresh random seed for every run, such that our results are not biased due to a single lucky weight assignment or similar.
We use \emph{naive} to refer to the naive algorithm, \emph{naive+} for the variant with early termination, and \emph{graph} for the graph-based algorithm.

\begin{table}[t]
 \setlength{\tabcolsep}{4pt}
	\caption{Success rate with a final candidate set of size at most $x$}
	\label{tab:eval:size}
 \centering
 \begin{tabular}{lrrrrrrrr}
$x$ & $\leq 1$ & $\leq 2$ & $\leq 3$ & $\leq 4$ & $\leq 5$ & $\leq 6$ & $\leq 7$ & $\leq 8$\\
\midrule
\multirow{3}{*}{}naive & $\begin{aligned} {\bf 10\%}& \\[-5pt] {\scriptsize \pm 2\%}&\end{aligned}$ & $\begin{aligned} {\bf 32\%}& \\[-5pt] {\scriptsize \pm 3\%}&\end{aligned}$ & $\begin{aligned} {\bf 52\%}& \\[-5pt] {\scriptsize \pm 3\%}&\end{aligned}$ & $\begin{aligned} {\bf 59\%}& \\[-5pt] {\scriptsize \pm 3\%}&\end{aligned}$ & $\begin{aligned} {\bf 60\%}& \\[-5pt] {\scriptsize \pm 3\%}&\end{aligned}$ & $\begin{aligned} {\bf 61\%}& \\[-5pt] {\scriptsize \pm 3\%}&\end{aligned}$ & $\begin{aligned} {\bf 61\%}& \\[-5pt] {\scriptsize \pm 3\%}&\end{aligned}$ & $\begin{aligned} {\bf 61\%}& \\[-5pt] {\scriptsize \pm 3\%}&\end{aligned}$\\[5pt]
naive+ & $\begin{aligned} {\bf 28\%}& \\[-5pt] {\scriptsize \pm 3\%}&\end{aligned}$ & $\begin{aligned} {\bf 46\%}& \\[-5pt] {\scriptsize \pm 3\%}&\end{aligned}$ & $\begin{aligned} {\bf 57\%}& \\[-5pt] {\scriptsize \pm 3\%}&\end{aligned}$ & $\begin{aligned} {\bf 60\%}& \\[-5pt] {\scriptsize \pm 3\%}&\end{aligned}$ & $\begin{aligned} {\bf 61\%}& \\[-5pt] {\scriptsize \pm 3\%}&\end{aligned}$ & $\begin{aligned} {\bf 61\%}& \\[-5pt] {\scriptsize \pm 3\%}&\end{aligned}$ & $\begin{aligned} {\bf 61\%}& \\[-5pt] {\scriptsize \pm 3\%}&\end{aligned}$ & $\begin{aligned} {\bf 61\%}& \\[-5pt] {\scriptsize \pm 3\%}&\end{aligned}$\\[5pt]
graph & $\begin{aligned} {\bf 58\%}& \\[-5pt] {\scriptsize \pm 3\%}&\end{aligned}$ & $\begin{aligned} {\bf 62\%}& \\[-5pt] {\scriptsize \pm 3\%}&\end{aligned}$ & $\begin{aligned} {\bf 65\%}& \\[-5pt] {\scriptsize \pm 3\%}&\end{aligned}$ & $\begin{aligned} {\bf 66\%}& \\[-5pt] {\scriptsize \pm 3\%}&\end{aligned}$ & $\begin{aligned} {\bf 66\%}& \\[-5pt] {\scriptsize \pm 3\%}&\end{aligned}$ & $\begin{aligned} {\bf 67\%}& \\[-5pt] {\scriptsize \pm 3\%}&\end{aligned}$ & $\begin{aligned} {\bf 68\%}& \\[-5pt] {\scriptsize \pm 3\%}&\end{aligned}$ & $\begin{aligned} {\bf 68\%}& \\[-5pt] {\scriptsize \pm 3\%}&\end{aligned}$\\[5pt]
\end{tabular}
\end{table}

\subsubsection{Traceback Success}
To analyze the efficacy of our proposed traceback, we analyzed how often \toolname succeeds in finding the origin.
The naive algorithm starts with an empty candidate set and selectively adds on-path ASes.
Hence we consider it successful if the true origin is contained in its final candidate set.
In contrast, the graph-based algorithm assumes all ASes as candidates initially, but excludes ASes during the traceback run.
Therefore, to be successful, the final candidate set must also be smaller than a certain threshold (ideally of size~1).

As shown in \autoref{tab:eval:size}, we find that both algorithms have similar success rates:
The naive algorithm manages to identify the true origin in $61\% \pm 3\%$\footnote{We denote the $95\%$ Agresti-Coull confidence interval} cases, while the graph-based traceback algorithm achieves a slightly higher success rate of $68\% \pm 3\%$ with a candidate set of 8 or less.
When limiting the graph-based algorithm to a single candidate, it still manages to succeed in $58\% \pm 3\%$ cases.
The naive algorithm is expected to find all on-path ASes and thus has an expected candidate set size of the average path length.
For the graph algorithm, we will use a candidate set size of 8 in the following.

\subsubsection{Runtime}
As noted before, the traceback speed of \toolname is limited by the rate at which BGP advertisements can be sent.
We therefore measure the runtime of each algorithm in the number of probing steps, with a realistic step duration of 10 minutes.
As can be seen from \autoref{fig:eval:steps}, the naive algorithm performs similar to the number of steps derived in \autoref{sec:traceback:naive:runtime}, with a mean and median runtime of $549$ steps.
Adding early termination on stub-ASes slightly reduces the median to $523$ steps, but more importantly reduces the average to just above $400$ steps, with $25\%$ even terminating in under $268$ steps.
Since it is strictly superior to the naive algorithm, we will only report results for the naive+ algorithm in the following.

Overall, the graph-based algorithm performs best by far, with an average of $159$ steps and a median of only $98.5$ steps.
Interestingly, a quarter of all cases terminate in less than $29$ steps, around $4.83$~hours at six advertisements per hour.
Most notably, even in the worst case the graph-based algorithm still completes its search in $415$ steps or less, thus making it faster than even the best run of the naive algorithm.
This demonstrates that utilizing AS graph information substantially improves the efficiency, with a median runtime speed-up of $5.6\times$.
\begin{figure}[t]
	\centering
	\includegraphics[width=0.99\linewidth]{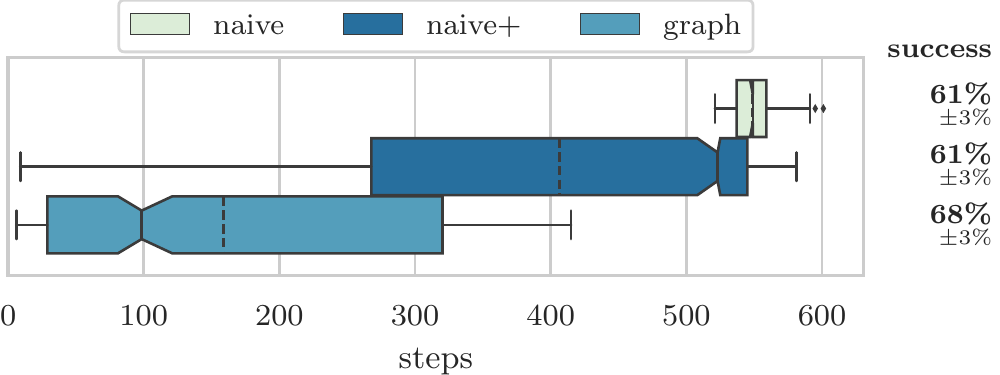}
	\caption{Runtime comparison. Dashed lines show average times, while the notch indicates the $95\%$ confidence interval around the median.}
	\label{fig:eval:steps}
\end{figure}

\subsubsection{Prefix Parallelization}\label{sec:eval:sim:parallel}
If \toolname observes an attack in multiple probing prefixes simultaneously, we can further speed up traceback by running multiple probes in parallel.
While this does not necessarily reduce the total number of \emph{advertisements}, it greatly reduces the number of \emph{steps} and hence the total runtime.
\autoref{fig:eval:prefixes:steps} therefore compares using different numbers of probing prefixes, one (no parallelization), two, and eight.
For the naive algorithm we see a linear speed-up, reducing the median runtime from $523$ to $262$ steps when using two prefixes and $66$ steps when using eight.
With the exception of the "binary search" step, probe selection is independent from previous probe results, and thus the naive algorithm can be almost perfectly parallelized.
The graph-based algorithm benefits less from parallelization, since probe selection is strongly coupled to previous probe results, but still sees a $5.8\times$ speed-up in the median runtime from $98.5$ steps to $17$ steps when going to eight prefixes.
As with the probe size, varying the number of prefixes does not have a significant influence on the success rate.

\begin{figure}[t]
	\centering
	\includegraphics[width=0.99\linewidth]{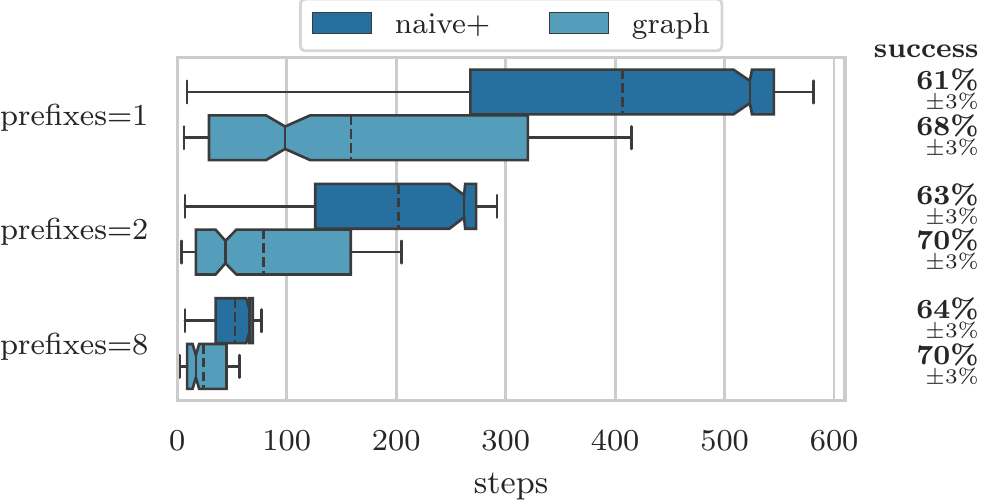}
	\caption{Influence of parallelization}
	\label{fig:eval:prefixes:steps}
\end{figure}

\subsubsection{Default Routes}
Intuitively, default routes should have a negative influence on our traceback approach, both in terms of runtime and success.
As the presence of default routes has been confirmed by multiple studies~\cite{bush2009internet,smith2020withdrawing} we would like to quantify to which extent they impact our results.
To this end, we repeated the experiment two more times, once in an ``ideal world'' setting with no default routes and once in an exaggerated setting where $80\%$ of ASes have a default route.
In line with intuition, both algorithms generally perform better with fewer default routes, as shown in \autoref{fig:eval:defaultroute:steps}.
Interestingly though, the graph-based algorithm still terminates faster when faced with $40\%$ default routes than the naive algorithm in the ideal world setting without any default routes.
The default route prevalence is also the most determining factor of traceback success:
In the idealized setting with no default routes, all simulated runs were successful, resulting in an estimated success rate of $100\%\pm 0\%$.
Yet, even with $80\%$ default routes, they achieve a success rate of $23\% \pm 3\%$ and $29\% \pm 3\%$ respectively.

\begin{figure}[t]
	\centering
	\includegraphics[width=0.99\linewidth]{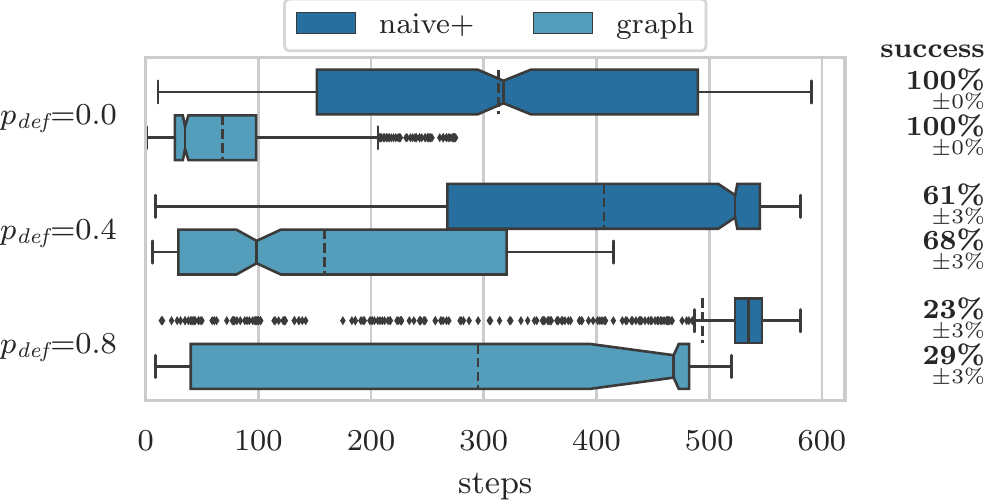}
	\caption{Influence of default route prevalence}
	\label{fig:eval:defaultroute:steps}
\end{figure}

\subsubsection{Probe Size}\label{sec:eval:sim:probesize}
We ran another experiment to measure the influence of the probe size, evaluating each algorithm with a probe size of $n=32$, $n=64$, and $n=128$.
The results, shown in \autoref{fig:eval:probesize:steps} confirm our intuition that the runtime of the naive algorithm scales inversely to the probe size.
Where at $n=128$ the naive algorithm has a median runtime of around $500$ steps, this doubles to just over $1000$ at $n=64$ and quadruples to $2000$ at $n=32$.
While the same relation holds true for the graph algorithm's \emph{maximum} runtime, $400$ steps at $n=128$ to $1600$ steps at $n=32$, it still achieves a median runtime of less than $500$ steps even with $n=32$.
At that, it outperforms both naive variants with a probe size of $n=128$.
As expected, the success rate of \toolname is not influenced by the probe size.

\begin{figure}[t]
	\centering
	\includegraphics[width=0.99\linewidth]{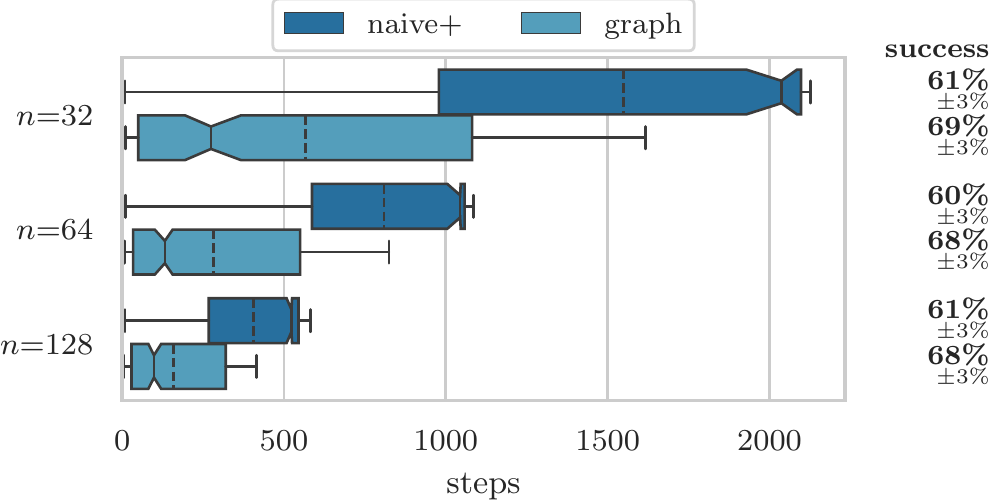}
	\caption{Influence of probe size}
	\label{fig:eval:probesize:steps}
\end{figure}

\subsubsection{Deployment Location}
To quantify the impact of the deployment location of \toolname, we also ran the experiment from three different ASes:
Next to a deployment at a PEERING customer we also simulated runs from a Tier-1 provider (AS174) as well as from a national research network.
However, as shown in \autoref{fig:eval:vantage:steps}, neither runtime nor success rate vary significantly between the three different deployments.

\begin{figure}[t]
	\centering
	\includegraphics[width=0.99\linewidth]{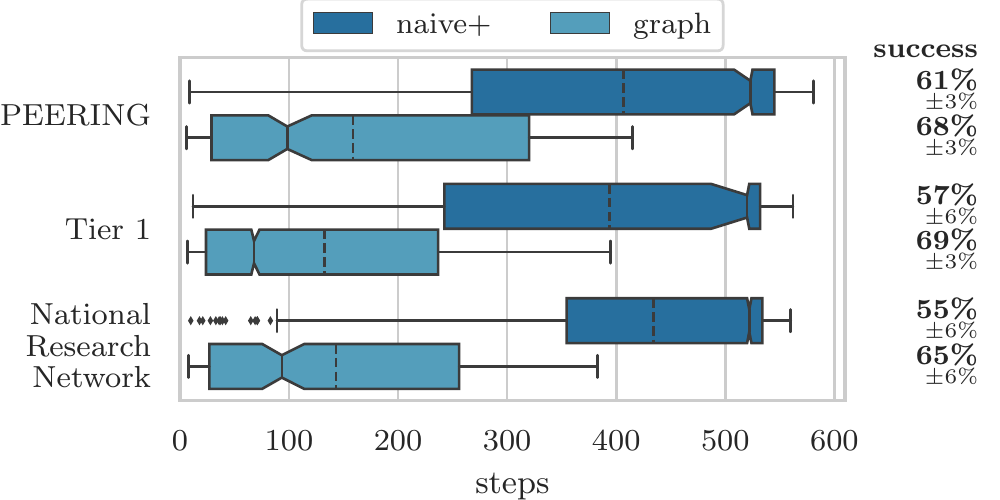}
	\caption{Influence of deployment location}
	\label{fig:eval:vantage:steps}
\end{figure}

\subsubsection{TTL Values}\label{sec:eval:sim:ttl}
\toolname relies on TTL values to detect when traffic is redirected to alternative routes.
However, while mostly stable, TTL values can change for reasons unrelated to our traceback and could even be modified by an attacker attempting to evade detection.
In a final experiment we therefore simulate how \toolname performs \emph{without} TTL values, i.e., only checking whether poisoning leads to a stop in traffic.
Perhaps surprisingly, the results in \autoref{fig:eval:ttl:steps} show that a lack of TTL values only leads to a slowdown of the graph algorithm, whose median runtime doubles, but has no statistically significant impact on overall traceback success.
This shows that \toolname can still be used even if TTL values are found to be unreliable, albeit with increased traceback times.

\begin{figure}[t]
	\centering
	\includegraphics[width=0.99\linewidth]{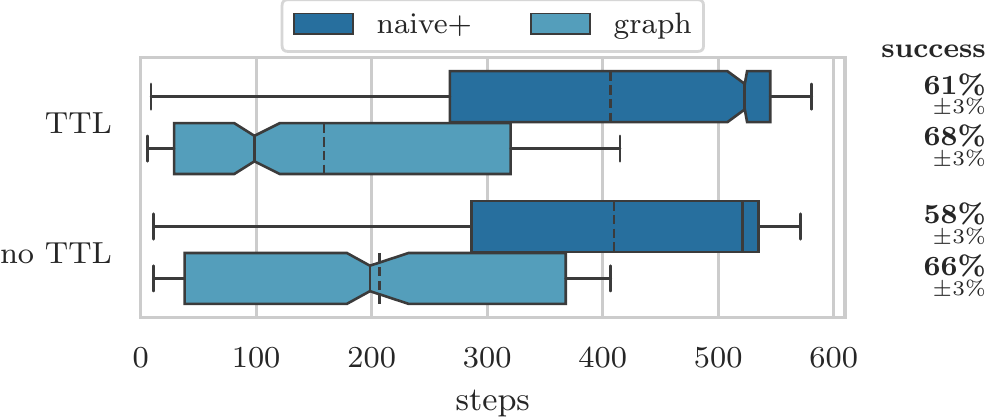}
	\caption{Influence of using TTL values}
	\label{fig:eval:ttl:steps}
\end{figure}

\subsection{Supporting Real-World Experiments}\label{sec:eval:real}
We also leveraged the PEERING BGP testbed~\cite{schlinker2019peering} to conduct experiments in the live Internet and analyzed attack data captured by the DDoS honeypot \amppot~\cite{kramer2015amppot}, to bootstrap additionally required parameters and to assess the plausibility of our simulated results.

\subsubsection{Changing Default Routes}
To faithfully model the effect of default routes, we not only need to know how many ASes \emph{have} a default route, but also \emph{how often} this default route differs from the regular route.
To measure this effect, we designed the following experiment:
From our temporary AS we would send out a poisoning advertisement for a target system, advertising one of three \texttt{/24} prefixes from our \texttt{/22} allocation.
The fourth prefix would be advertised regularly to serve as a control.
After a short while we would then ping the target system from both, the poisoning prefix and the control prefix.
If ping replies are still observed in the poisoning prefix we can conclude that the target does have a default route.
If the TTL values also differ between the poisoning and the control prefix we can further infer that it differs from the regular route.

As targets we randomly selected $624$ RIPE Atlas probes~\cite{ripeatlas} in different ASes.
TTL values were recorded five minutes after the advertisement was sent to allow routes to settle~\cite{labovitz2000delayed,katz2012lifeguard}, and new advertisements were only sent every ten minutes per prefix.
Out of the $624$ tested RIPE Atlas probes we found $360$~($58\%$) to have default routes, i.e., we would still receive pings for the prefix that was advertised with a poisoning advertisement only, a fraction larger than the default route prevalence reported by Smith et al.~\cite{smith2020withdrawing} in 2020, but lower than the one reported by Bush et al.~\cite{bush2009internet} in 2009.
We attribute the discrepancy to two effects:
First, our sample size is smaller than the one employed by both Smith et al. and Bush et al. and further biased towards ASes housing RIPE Atlas probes.
Second, we found the probe's AS information as reported by RIPE's Atlas back end to not always be accurate and thus suspect that in some cases the probe's \emph{actual} AS was different from the one poisoned, thereby making them a false positive.
As noted in \autoref{sec:eval:sim:results}, we picked a default route probability of $40\%$ for our simulator.

In $101$ of $360$~($28\%$) cases we further observed a discrepancy in TTL values between the poisoned and the control prefix, letting us conclude that in these cases the default route in fact differs from the regular route.
At a confidence level of $95\%$ this thus gives a probability of having a differing default route between $23\%$ and $33\%$.
We therefore model this in our simulator by having an AS choose a different upstream in $30\%$ of the cases when poisoned.

\subsubsection{On-Path Poisoning}
As a main primitive our approach relies on the assumption that poisoning on-path ASes provokes some observable change in the target traffic, either a change in TTLs due to a route change or a complete absence of traffic due to the lack of alternative routes.
To measure how this assumption holds up in practice we designed the following experiment:
As the ``traffic origin'' we randomly selected a RIPE Atlas probe~\cite{ripeatlas}, and used a stream of ping packets from the probe to our traceback system deployed at PEERING as the ``attack traffic''.
To obtain a real-time approximation of the taken AS path, we scheduled a traceroute measurement from the Atlas probe to our traceback system.
Running the traceroute in that direction ensures that we measure the actual ingress path to our system and do not have to assume paths to be symmetric.
Using the Team Cymru IP to ASN Lookup~\cite{cymru} we then mapped traceroute IP hops to ASes and, subsequently, poisoned every discovered on-path AS one-by-one.
In contrast to the previous experiment, this time we did not measure the impact on the poisoned AS, but on the ping packets from the Atlas probe simulating the target traffic flow.
As before we waited five minutes before taking measurements after new advertisements and ten minutes between advertisements for the same prefix.

Note that the list of on-path ASes obtained that way is naturally incomplete, as a traceroute may miss hops along the path and the IP-to-ASN mapping leads to further inaccuracies as well~\cite{zhang2010quantifying, hyun2003traceroute}.
To validate the real-time IP-to-ASN mapping we obtained from Team Cymru we later compared the mapping results to the data covering the same timeframe published on RIPEstat~\cite{ripestat}.
Here we found $22$ instances in which the IP-to-ASN mappings disagreed.
Manually analyzing these $22$ cases we determined the Team Cymru mapping to be correct in $14$ of these, and excluded the other $8$ from further analysis.
As we were only interested in measuring true on-path ASes we further also excluded hops that mapped to the same AS as the Atlas probe mimicking the traffic source.

We ran the experiment with Atlas probes located in $161$ different ASes, allowing us to collect a total of $327$ unique (on-path AS, target AS)-pairs.
In total, we found that poisoning an on-path AS resulted in a loss of traffic in $137$~($42\%$) cases and a change in the TTL in further $112$~($34\%$) cases.
Only in $78$~($24\%$) instances we found poisoning on-path ASes to have no measurable impact on the target traffic at all.

To further analyze whether the impact is related on the distance between the poisoned and the measured AS, we grouped the results by their hop distance to the target AS.
While ideally we would have used \emph{AS} hop distances for this, this would have required access to the full AS path.
We thus resorted to using \emph{IP} hop distances as they could be obtained easily from the traceroute, and let the poisoning results count towards all IP hops that were mapped to the same AS.
\autoref{fig:eval:realworld:onpath} shows the normalized results per IP hop count, with the black line denoting the number of tested AS pairs per hop distance to give an indication of their significance.
As depicted, most on-path ASes were discovered at a distance of $2$ to $10$ hops from the origin.
However, we find that overall the distance between the poisoned AS and the traffic source appears to have little impact on how traffic is affected, and that in most cases a measurable impact can be expected.

\begin{figure}
	\centering
	\includegraphics[width=0.99\linewidth]{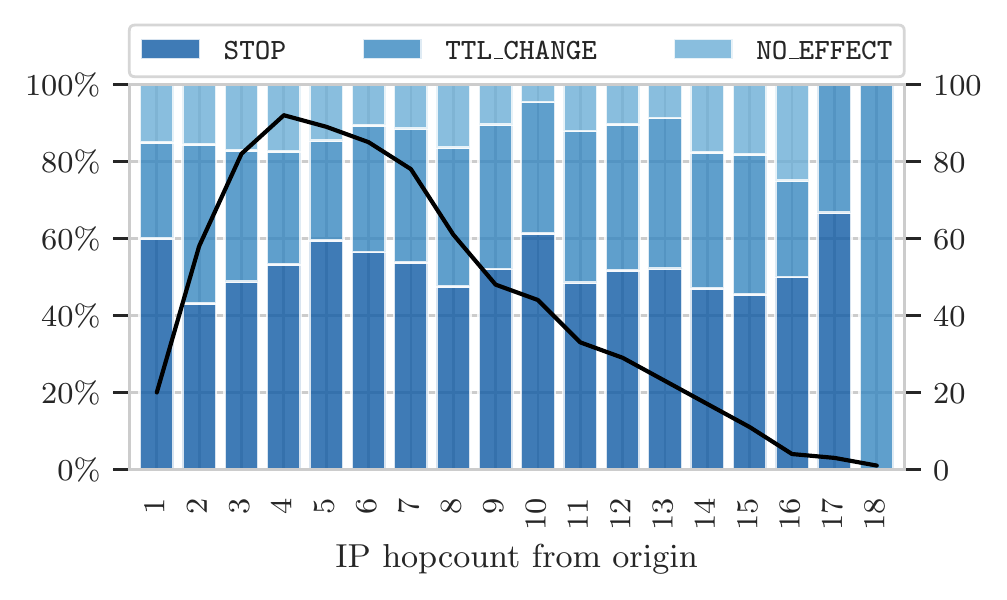}
	\caption{Real-world results of on-path AS poisoning}
	\label{fig:eval:realworld:onpath}
\end{figure}

\subsubsection{TTL Distribution}\label{sec:eval:realworld:ttl}
We leveraged the same RIPE Atlas traceroutes to obtain a realistic model of inner-AS path lengths for our simulator, such that we can simulate IP hop counts of AS paths.
For this, we utilized the same IP-to-AS mapping and then scanned the traceroute results for consecutive hops in different ASes.
If we can find two of these transitions, $A\rightarrow B$ at hops $(x, x+1)$ and $B\rightarrow C$ at hops $(y, y+1)$, then the inner-AS path length of $B$ from $A$ to $C$ is $y+1-x$ hops.
We were able to extract 137 AS-triples and their corresponding hop counts and found that a negative binomial distribution with $k=3, p=0.62$ was a good fit.

\subsubsection{Continuity of Spoofing Activity}\label{sec:eval:realworld:activity}
Even with the graph-based algorithm, \toolname has an average runtime of 147 steps, or just over one day at six advertisements per hour.
We therefore assess, how long an attack source may be observed consecutively.
To this end, we collected data on $13,321,740$ amplification attacks observed by \amppot~\cite{kramer2015amppot}.
All attacks were collected between 2015-11-25 and 2020-06-15 by a \emph{Selective Response} enabled honeypot~\cite{krupp2016identifying}.
Selective Response restricts every scanner to finding a different set of 24 of the 48 honeypot IPs, thereby imposing a unique fingerprint on the scanner.
Prior work revealed a tight connection between scan and attack infrastructure~\cite{krupp2016identifying}.
We will thus use the fingerprint as an identifier for the (unknown) traffic source.

To focus on distinctive fingerprints, we only considered attacks that used at least 12 and at most 24 honeypot IPs, which left us with $8,635,257$ remaining attacks.
For each fingerprint, we then computed the longest period, during which it was observed at least $90\%$ ($99\%$) of the time.
\autoref{fig:eval:realworld:activity} shows the fraction of \emph{attacks} whose fingerprint was observed for a given duration at a given activity level.
From this plot we find that the majority of attacks stem from sources that are also active for extended periods of time.
For example, over $68\%$ of attacks stem from a source that was seen for over a week at $90\%$ activity, $51\%$ even at $99\%$.

\begin{figure}
	\centering
	\includegraphics[width=0.99\linewidth]{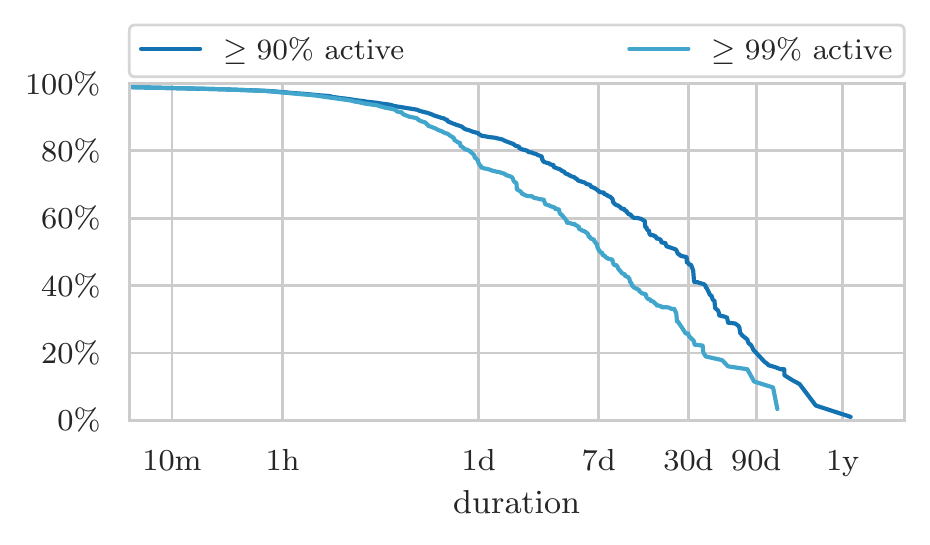}
	\caption{Fingerprint activity}
	\label{fig:eval:realworld:activity}
\end{figure}

\section{Discussion}\label{sec:discussion}
We now discuss how our evaluation results translate to the real-world deployability of our approach and the ethics of our live Internet experiments.

\subsection{BGP Mechanisms}\label{sec:discussion:bgp}
While our simulator strives to faithfully model routes and advertisement propagation, it does so by abstracting ASes and their relationships into a graph model.
This abstraction may not fully model BGP in the real world, where ASes do not act as atomic entities but advertisements are instead passed between routers and processed by actual BGP implementations.
As such, multiple mechanisms may influence the propagation of advertisements which we discuss below.

\subsubsection{Route Flap Dampening}\label{sec:discussion:rfd}
Route Flap Dampening (RFD)~\cite{rfc2439} aims to decrease the load on routers by maintaining a per-route penalty score.
This score is increased for every update, but decays exponentially over time.
Once a route's score reaches a threshold, it is no longer considered for routing nor propagated to peers until its score drops below a re-use threshold.
Studies have shown that the default thresholds used in RFD can actually be harmful even for stable routes~\cite{mao2002route}, and RFD was subsequently advised against~\cite{ripe378}.
Although later studies proposed new settings that have less adverse effects~\cite{ripe580,rfc7196,pelsser2011route}, it still remains disabled by default in, e.g., Cisco routers~\cite{ciscoBGP}.
However, as RFD maintains a score on a \emph{per-route} rather than a \emph{per-prefix} basis, it does not affect our traceback technique:
Whilst we send out multiple updates for the same prefix, every update includes a new AS path and therefore constitutes a new route~\cite[sec.~4.4.3]{rfc2439}.
Hence, even if RFD was enabled, our traceback should still work.

\subsubsection{Minimum Route Advertisement Interval}\label{sec:discussion:mrai}
Another measure to prevent high load on BGP routers is the \emph{Minimum Route Advertisement Interval} (MRAI), that limits the rate at which updates for a certain prefix are passed on to peers.
The idea behind this is that withholding routes for a certain time allows the (downstream) path exploration to converge, thereby reducing the number of updates and withdrawals sent further to peers.
The recommended value for the MRAI timer is 30 seconds~\cite[sec.~10]{rfc4271}.
As we always waited ten minutes between advertisements this should have given routers ample time for this timer to expire.
However, it also means that a real-world deployment of our approaches should employ a similar delay between advertisements.

\subsubsection{Path Filtering}\label{sec:discussion:filter}
Smith et al. also report ASes to filter advertisements based on path lengths or by checking for potential poisoning paths~\cite{smith2020withdrawing}.
For length based filters \autoref{sec:eval:sim:probesize} indicates that our graph-based approach would still perform well in many cases even when limiting the path length to $32$.
However, filtering of poisoning advertisements can impede our traceback approach if it is performed by one of the on-path ASes.
In this case, \emph{any} poisoning advertisement may provoke the target traffic to change and thus both algorithms may falsely flag the probed AS to be on-path---even if the observed change was only caused by filtering.

\subsubsection{Non-Uniform Routing}\label{sec:discussion:nonuniform}
The use of active probing in our traceback assumes that all outbound packets from an AS are affected in the same way by a route change, regardless of whether they are originating from the AS or forwarded on behalf of another.
In theory, every edge-router of an AS could behave differently and use a different route, which in turns means that different hosts in the AS could behave differently under poisoned advertisements.
A study by Mühlbauer et al.~\cite{muhlbauer2006building} finds that such route diversity can be modelled by splitting ASes into multiple \emph{quasi-routers}, each modelling a consistent routing behavior observed by the AS.
However, they find that the vast majority in route diversity comes from prefix-dependent preferences, and that ``for almost all ASes one quasi-router suffices''.
Since in our case the attack traffic as well as the active measurement replies are destined towards the \emph{same prefix}, they are not affected differently by prefix-dependent routing preferences.
We can thus conclude that, for our purposes, \emph{all} outbound traffic from an AS towards our prefix takes the same AS level path.

\subsection{Observation Correlation}\label{sec:discussion:correlation}
Our approach relies on stopping traffic and changing TTL values to infer AS-level route changes.
However, TTL values along a path may also change for other reasons (e.g., inner-AS route changes), and attack traffic may cease because the attack stopped entirely.
Therefore, if it is unclear whether a change was the result of a poisoning advertisement, the advertisement can be repeatedly withdrawn and re-advertised until a correlation can be confirmed or refuted.
Furthermore, if \toolname's control honeypots also observe the same attack traffic, they too can be used to decide whether a change is spurious.

\subsection{AS Flow Graph Correctness}\label{sec:discussion:graph}
Whereas our naive algorithm makes no assumptions about AS relationships, our graph-based traceback algorithm assumes that (1) AS paths adhere to the valley-free assumption and AS relationships follow the standard customer-provider/peer-to-peer model, and (2) a global view of these relations is available.
We discuss both assumptions in detail below.

\subsubsection{Valley-Free Assumption and AS Relationships}\label{sec:discussion:asrelation}
While both customer-provider and peer-to-peer relations between ASes are well-established and have intuitive economic incentives, the exact relation between two ASes can be arbitrarily complex.
For example, Giotsas et al.~\cite{giotsas2014inferring} identified $4026$ ASes whose relationships they classified as either \emph{hybrid} or \emph{partial transit}.
In a hybrid relation, two ASes exhibit different relations at different exchanges, whereas a partial transit relation is a restricted form of a customer-provider relation.
Giotsas and Zhou~\cite{giotsas2012valley} also find a small number of AS paths that seemingly violate the valley-free assumption, which they also attribute to non-standard AS relationships.

Our AS flow graph only captures customer-provider and peer-to-peer relations and requires paths to adhere to the valley-free assumption.
Thus, our graph-based traceback approach may falsely exclude ASes it believes to be reachable or dominated by others when faced with such non-standard relations.
However, as long as reachability and dominance information can be efficiently encoded (e.g., through adapting links in the graph), a similar algorithm may still be employed.

\subsubsection{AS Relationship Dataset}
AS relationships are usually subject to non-disclosure agreement and can thus only be inferred from publicly available routing data.
While the state-of-the-art of inferring the global AS relationship graph has been constantly evolving~\cite{gao2001inferring,subramanian2002characterizing,di2003computing,erlebach2002classifying,xia2004evaluation,dimitropoulos2005inferring,dimitropoulos2007relationships,hummel2007acyclic,luckie2013relationships}, inference will inevitably only be able to produce an approximation of the AS graph.
As with non-standard AS relations, missing or incorrect links are problematic for our graph-based algorithm, which could cause it to wrongly discard ASes as candidates.
In that regard, our simulation results should be seen as best-case results, as both simulator and traceback use the same graph data.

\subsection{Induced Route Changes}\label{sec:discussion:induced}
In some cases, a poisoned advertisement can lead to route changes or losses even at ASes that can never receive an advertisement through one of the poisoned ASes.
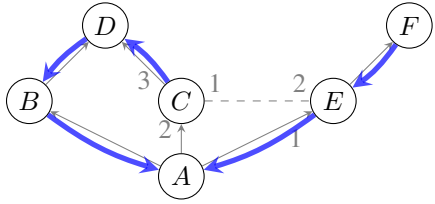
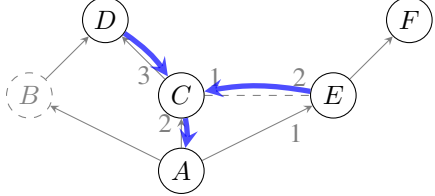
\begin{figure}[t]
 \centering
 \begin{subfigure}[t]{\columnwidth}
 \centering
 \begin{tikzpicture}[]

  \node [circle,draw,minimum size=6mm,label=center:$A$] (a) at (0,-2) {};
  \node [circle,draw,minimum size=6mm,label=center:$B$] (b) at (-2,-1) {};
  \node [circle,draw,minimum size=6mm,label=center:$C$] (c) at (0,-1) {};
  \node [circle,draw,minimum size=6mm,label=center:$D$] (d) at (-1,0) {};
  \node [circle,draw,minimum size=6mm,label=center:$E$] (e) at (2,-1) {};
  \node [circle,draw,minimum size=6mm,label=center:$F$] (f) at (3,0) {};

  \draw[->,>=stealth, opacity=0.5] (a) -- (b);
  \draw[->,>=stealth, opacity=0.5] (a) -- node[left, pos=0.85] {2} (c);
  \draw[->,>=stealth, opacity=0.5] (a) -- node[below, pos=0.85] {1} (e);
  \draw[dashed, opacity=0.5] (c) -- node[above, pos=0.10] {1} node[above, pos=0.90] {2} (e);
  \draw[->,>=stealth, opacity=0.5] (b) -- (d);
  \draw[->,>=stealth, opacity=0.5] (c) -- node[left, pos=0.10] {3} (d);
  \draw[->,>=stealth, opacity=0.5] (e) -- (f);
	\draw [->, >=stealth, blue, opacity=0.7, line width=2pt] (b) to[bend right=10] (a);
	\draw [->, >=stealth, blue, opacity=0.7, line width=2pt] (d) to[bend right=10] (b);
	\draw [->, >=stealth, blue, opacity=0.7, line width=2pt] (c) to[bend right=10] (d);
	\draw [->, >=stealth, blue, opacity=0.7, line width=2pt] (e) to[bend left=10] (a);
	\draw [->, >=stealth, blue, opacity=0.7, line width=2pt] (f) to[bend left=10] (e);
 \end{tikzpicture}
	 \caption{$A$ is customer of $B$, $C$, and $E$; $B$ and $C$ are customers of $D$; $E$ is customer of $F$; $D$ and $E$ have a peer-to-peer relation; numbers indicate local preferences; thick arrows show the resulting routes to $A$.}
 \end{subfigure}

 \begin{subfigure}[t]{\columnwidth}
 \centering
 \begin{tikzpicture}[]
	\node [circle,draw,minimum size=6mm,label=center:$A$] (a) at (0,-2) {};
	 \node [circle,draw,minimum size=6mm,label={[opacity=0.5]center:$B$},opacity=0.5,dashed] (b) at (-2,-1) {};
  \node [circle,draw,minimum size=6mm,label=center:$C$] (c) at (0,-1) {};
  \node [circle,draw,minimum size=6mm,label=center:$D$] (d) at (-1,0) {};
  \node [circle,draw,minimum size=6mm,label=center:$E$] (e) at (2,-1) {};
  \node [circle,draw,minimum size=6mm,label=center:$F$] (f) at (3,0) {};

  \draw[->,>=stealth, opacity=0.5] (a) -- (b);
  \draw[->,>=stealth, opacity=0.5] (a) -- node[left, pos=0.85] {2} (c);
  \draw[->,>=stealth, opacity=0.5] (a) -- node[below, pos=0.85] {1} (e);
  \draw[dashed, opacity=0.5] (c) -- node[above, pos=0.10] {1} node[above, pos=0.90] {2} (e);
  \draw[->,>=stealth, opacity=0.5] (b) -- (d);
  \draw[->,>=stealth, opacity=0.5] (c) -- node[left, pos=0.10] {3} (d);
  \draw[->,>=stealth, opacity=0.5] (e) -- (f);
	\draw [->, >=stealth, blue, opacity=0.7, line width=2pt] (c) to[bend left=10] (a);
	\draw [->, >=stealth, blue, opacity=0.7, line width=2pt] (d) to[bend left=10] (c);
	\draw [->, >=stealth, blue, opacity=0.7, line width=2pt] (e) to[bend right=10] (c);

 \end{tikzpicture}
 \caption{Poisoning $B$ causes $C$ to switch routes, which induces a route change at $E$ and a loss of connection at $F$.}
 \end{subfigure}
 \caption{Example of induced changes}\label{fig:induced}
\end{figure}
Consider for example the network shown in \autoref{fig:induced}.
In the normal state, AS $C$ receives three advertisements for routes to $A$, the direct route from $A$, the route ${D}\to{B}\to{A}$ from $D$, and the route ${E}\to{A}$ from $E$.
From these it will choose the route via $D$, since it has the highest local preference.
However, since this route is received from one of $C$'s providers, it cannot be exported to the peer $E$.
AS $E$ therefore only sees the direct route from $A$, which it can further advertise to its provider $F$.

Poisoning $B$ makes the route ${D}\to{B}\to{A}$ unavailable.
$C$ therefore switches to its next preferred route, the direct route to $A$.
Since $A$ is a customer of $C$, $C$ can advertise this new route ${C}\to{A}$ to its peer $E$.
This, however, induces a route change at $E$, because this new route has a higher local preference at $E$.
Furthermore, since the best route at $E$ is now received from a peer, it can no longer be exported to $E$'s provider $F$.
$F$ therefore loses its connection to $A$.
Note that neither $E$ nor $F$ could ever have a route to $A$ via the poisoned AS $B$.
Yet, they see a route change or even a loss of connectivity.

An AS can only cause such \emph{induced} changes, if it can receive two different routes, one from a customer and one from a peer or provider.
Only in that case, the set of other ASes that it can export routes to can change:
Switching from a customer-provided route to a peer/provider-provided one limits it to advertise this route to its customers, switching the other way enables it to also advertise a route to its peers and providers.
We will call these ASes \emph{ambiguous}.
ASes that are reachable through an ambiguous AS may therefore experience induced changes.
Further, if an AS is \emph{only} reachable through peers or providers of ambiguous ASes, it may also experience an induced connectivity loss.

For our naive algorithm, such induced changes can cause additional ASes to appear in the final result set (e.g., in the example above, $B$ would be erroneously considered \emph{on-path} even if the attack came from $F$).
Yet, these induced changes cannot ``hide'' actual on-path ASes from detection.
Our graph algorithm on the other hands requires a small modification (shown in \autoref{alg:traceback:induced}) to correctly handle induced changes:
Whenever the candidate set is reduced, ASes that could show the observed behaviour due to induced changes need to be retained.
\begin{figure}[!t]
 \begin{algorithmic}
    \Procedure{UpdatePassive'}{$\probe, \rp$}
      \If{$\rp = \trafficstop$}
			  \State $\candidates_{stop} \gets \indStop_{G_A}(\probe)$
				\State $\candidates \gets \candidates \cap \left(\dominatees_{G_A}(\probe) \cup \candidates_{stop}\right)$
      \ElsIf{$\rp = \ttlchange$}
			  \State $\candidates_{change} \gets \indChange_{G_A}(\probe)$
				\State $\candidates \gets \candidates \cap \left(\reachable_{G_A}(\probe) \cup \candidates_{change}\right)$
      \EndIf
    \EndProcedure
    \Statex
    \Procedure{UpdateActive'}{$\probe, \rp, \ra$}
      \If{$\rp = \trafficstop$}
        \State $\exclude \gets \left\{X \in \probe\mid\ra[X] \neq \trafficstop\right\}$
			  \State $\candidates_{stop} \gets \indStop_{G_A}(\probe)$
				\State $\candidates \gets \candidates \setminus \left(\reachable{G_A}(\exclude) \setminus \candidates_{stop}\right)$
      \Else
        \State $\exclude \gets \left\{X \in \probe\mid\ra[X] = \trafficstop\right\}$
			  \State $\candidates_{change} \gets \indChange_{G_A}(\probe)$
				\State $\candidates \gets \candidates \setminus \left(\dominatees_{G_A}(\exclude) \setminus \candidates_{change}\right)$
      \EndIf
      \State $\probe \gets \probe \setminus \exclude$
    \EndProcedure
 \end{algorithmic}
 \caption{Flow Graph based traceback algorithm adapted to handle induced route changes}\label{alg:traceback:induced}
\end{figure}

\begin{figure}[t]
	\centering
	\includegraphics[width=0.99\linewidth]{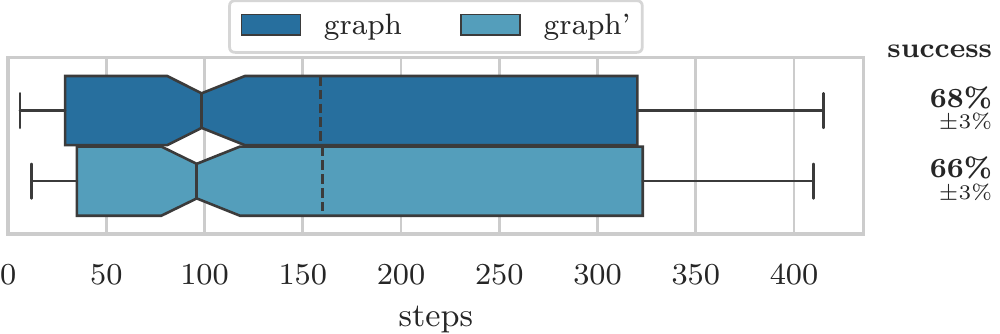}
	\caption{Influence of induced route changes. The modified algorithm is labeled \emph{graph'}.}
	\label{fig:eval:induced:steps}
\end{figure}
To assess the impact of induced route changes, we ran a simulation of our graph-based traceback with these modifications.
As shown in \autoref{fig:eval:induced:steps}, neither runtime nor success rate differ significantly compared to the unmodified version.
This can be explained because ambiguous ASes are relatively rare:
Analyzing the flowgraph used during simulation reveals only $231$ ambiguous ASes.
We can thus conclude, that induced route changes only have negligible impact on the traceback performance.

\subsection{Multi-Source Attacks}
Amplification attacks are largely launched from single sources~\cite{kramer2015amppot,krupp2017linking}.
For \toolname we thus assume that every attack has a unique source AS.
In theory though, amplification attacks could also be launched from multiple colluding sources in different ASes.

If the attack traffic of a multi-source attack can be separated by source (e.g., by different TTL values due to different path lengths), both algorithms of \toolname still work as before.
Otherwise, successful poisoning of one of the sources can only be measured as a decrease in attack traffic volume.
Yet, even in that case, the naive algorithm should be able to find all attack sources.

\subsection{Evasion}\label{sec:discussion:evasion}
During a traceback run \toolname creates a large number of route advertisements for the probing prefixes.
Attackers that are aware of our system may therefore try to evade it by monitoring public BGP data~\cite{routeviews,riperis} in order to identify and exclude the probing prefixes.
While we cannot \emph{prevent} such active evasion attempts, we can at least \emph{detect} if an attacker is evading our system.
In this case, we would not observe any attacks from a given adversary at the probe honeypots, but would keep observing them at the control honeypots.
Yet, even outside of \toolname, amplification honeypots are inherently detectable due to their rate limiting behavior~\cite{kramer2015amppot}.

Besides evasion, attackers might also attempt to deceive \toolname by modifying their initial TTL values.
However, as shown in \autoref{sec:eval:sim:ttl}, \toolname still performs well even when ignoring TTL values.

\subsection{Ethical Considerations}
We took several measures to ensure that our experiments did not impact other systems or lead to instabilities in the BGP.
To obtain a real-time estimate of the currently active BGP path, we conducted traceroute runs from RIPE Atlas probes to our measurement system hosted at PEERING.
In order to keep the impact on other systems minimal, we only used one-off measurements with the default values defines by RIPE (i.e., packets of 48 bytes, at most 3 packets per destination).
For our active ping measurements we ensured to send at most one packet per minute per target on average.
At 64 bytes per packet (1Bps), we believe these to have negligible impact.

All experiments that involved sending (poisoned) BGP advertisements were conducted only after consulting PEERING operators.
To further ensure that other experiments running at the testbed were not influenced by ours, we used temporary prefixes and a temporary ASN allocated by our regional Internet registry for the purpose of these experiments.
We also registered our temporary allocations in the WHOIS database, such that network operators were able to contact us directly.
Additionally, we closely monitored network operator mailing lists.

\subsubsection{Impact on Legitimate Traffic}
Poisoning advertisements can render a prefix temporarily unreachable from parts of the Internet.
Therefore, \toolname's probing prefixes should host no other systems but honeypot reflectors.
As we use BGP Poisoning on these small prefixes only, other prefixes remain unaffected---effectively excluding collateral damage on benign traffic.

\section{Related Work}\label{sec:related}
We find related work from two fields of study:
The first considers the problem of IP spoofing and traceback, the second is concerned with BGP Poisoning as a primitive for traffic engineering and security applications.

\subsection{IP Spoofing and Traceback}
IP spoofing and the resulting need for traceback has been an active field of research for many years.
One common approach collects (statistical) telemetry data at multiple routers from which the origin of spoofed packets can later be derived~\cite{snoeren2001hash,snoeren2002single,li2004large,sung2008large,korkmaz2007single}.
Another approach lets routers encode path information into the packet itself~\cite{savage2000practical,doeppner2000using}, using additional IP options or unused header fields~\cite{savage2000practical,song2001advanced,dean2002algebraic,yaar2006stackpi,chen2007divide}, advanced encoding schemes~\cite{song2001advanced,yaar2003pi,gao2007practical}, or probabilistic techniques~\cite{savage2000practical,savage2001network,duwairi2004efficient,dong2005efficient,shokri2006ddpm} to reduce the per-router overhead.
In theory, both approaches could perfectly track the origin of spoofed traffic.
However, both require a widespread deployment in routers and the cooperation of multiple ISPs.
As some of these techniques have been proposed almost two decades ago, it is clear that this is a inhibiting factor for traceback in practice.
In contrast, \toolname requires no cooperation of other systems nor changes to existing routers.

Another line of work considers the problem of traceback in the specific case of amplification attacks.
Krupp et al.~\cite{krupp2016identifying,krupp2017linking} relaxed the traceback problem to finding scanning systems used for attack preparation or re-identifying Booter services responsible for attacks.
In contrast, our approach can identify ASes actively participating in the attack without upfront knowledge of the system.

A third line of research considers the broader question of identifying systems that are \emph{capable} of IP spoofing~\cite{beverly2005spoofer,beverly2013initial,kuhrer2014exit,luckie2019network}.
While another important step in alleviating the problem of IP spoofing, we consider their work orthogonal to ours:
Whereas they find systems that could in principle send spoofed packets, we aim to identify malevolent actors red-handed.

In concurrent work, Fonseca et al.~\cite{fonseca2020tracking} also attempt to identify spoofing sources by varying BGP advertisements from multiple anycast locations.
For this, they create $700$ different advertisement configurations by selecting subsets of their peers, adding path prepends, and poisoning immediate neighbors.
For every configuration they then record through which peer the attack traffic is received, thereby generating a fingerprint of source ASes.
In contrast to \toolname, their approach necessarily requires $700$ probing steps and cannot distinguish ASes that share a common path beyond their immediate neighbors.

\subsection{BGP Poisoning}
Although only a side-effect of BGP's loop detection mechanism, BGP Poisoning has seen a wide field of applications.
In the area of measurement studies, Colitti et al.~\cite{colitti2007investigating} and Anwar et al.~\cite{anwar2015investigating} employed BGP Poisoning to supplement the analysis of AS relationships and prefix propagation.
Katz-Bassett et al.~\cite{katz2011machiavellian,katz2012lifeguard} showed how BGP Poisoning may be used to actively repair routes, and Smith et al.~\cite{smith2018routing} later showed how it can be used to avoid DDoS congested links.
Finally, works by Tran et al.~\cite{tran2019feasibility} and Smith et al.~\cite{smith2020withdrawing} analyze how well BGP Poisoning works in practice through extensive measurements.
Yet, to the best of our knowledge, our approach and the concurrent work by Fonseca et al.~\cite{fonseca2020tracking} are the first to leverage BGP Poisoning for DDoS attack traceback.

\section{Conclusion}\label{sec:conclusion}
IP spoofing not only enables amplification attacks, but also hides the attackers' true whereabouts.
Our system \toolname employs a novel traceback approach that shows that BGP Poisoning can be used to track down an attacker's network location---requiring neither the assistance of external parties nor knowing the attacker in advance.
We find that our naive algorithm has a median runtime of $549$ steps, or just under four days with realistic parameters, thus showing the feasibility of our approach in practice.
Our second algorithm leverages a graph model of BGP path propagation built from AS relationship data and manages to reduce this runtime to $98.5$ steps, or just over one day, for the same parameters---and in only $29$ steps, under five hours, in a quarter of all cases.

\ifthenelse{\boolean{blind}}{
}{
	\section*{Acknowledgment}
We would like to thank PEERING for letting us conduct real-world BGP measurements as well as RIPE NCC for a temporary IP-prefix and ASN allocation and the RIPE Atlas platform.
We would further like to thank Ethan Katz-Bassett and Italo Cunha for an insightful discussion of an earlier draft of this paper.
Finally, we would also like to thank the anonymous reviewers and our shepherd for their valuable feedback.

}

\bibliographystyle{ieeetr}
\bibliography{biblio,rfcs}

\clearpage
\appendices
\section{Correctness of AS Flow Graphs}\label{app:graph:proof}
\begin{theorem}
	Let $\aspath = (X_1, \dots, X_n) \in \AS^n$ be a valley-free path from AS $X_1$ to AS $X_n$ and $G = (V,E)$ the flow graph constructed according to \autoref{sec:graph:flowgraph}.
	Then there exists a path $\nodepath = (x_1 = u_{X_1}, \dots, x_m = c_{X_n}) \in V^m$ in $G$.
\end{theorem}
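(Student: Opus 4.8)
The plan is to exploit the canonical shape of a valley-free path --- an ``uphill'' run of \rCPv edges, an optional single \rPPv edge at the ``peak'', and a ``downhill'' run of \rPCv edges --- and to build the required $G$-path one layer at a time: stay among the \unconstrained nodes $u_A$ while climbing, cross to the \constrained nodes $c_A$ exactly once at the peak, and stay among the $c_A$ nodes while descending. The three phases of a valley-free path correspond one-to-one to the three groups of edges in the definition of $E$, and that is what makes the construction go through.

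In detail, I would first use the valley-free assumption to split $\aspath$ as an uphill prefix $X_1, \dots, X_p$ with $X_i$ a customer of $X_{i+1}$ for $1 \le i < p$, possibly followed by a \rPPv peak edge $(X_p, X_{p+1})$, followed by a downhill suffix $X_q, \dots, X_n$ with $X_{i+1}$ a customer of $X_i$ for $q \le i < n$, where $q = p+1$ when the peak \rPPv edge is present and $q = p$ otherwise. Then I would assemble $\nodepath$ from three concatenated pieces: first the uphill walk $u_{X_1} \to u_{X_2} \to \dots \to u_{X_p}$, each step $u_{X_i} \to u_{X_{i+1}}$ lying in $E$ because $(X_i, X_{i+1}) \in \CP$ contributes the edge $(u_{X_i}, u_{X_{i+1}})$; then the crossing step, which is the edge $u_{X_p} \to c_{X_{p+1}} = c_{X_q}$ contributed by $(X_p, X_{p+1}) \in \PP$ when the peak edge is present, and otherwise the always-present within-AS edge $u_{X_p} \to c_{X_p} = c_{X_q}$ --- note that both alternatives land on the node $c_{X_q}$; and finally the downhill walk $c_{X_q} \to c_{X_{q+1}} \to \dots \to c_{X_n}$, each step $c_{X_i} \to c_{X_{i+1}}$ lying in $E$ because $(X_{i+1}, X_i) \in \CP$ contributes the edge $(c_{X_i}, c_{X_{i+1}})$. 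The concatenation is a walk in $G$ from $u_{X_1}$ to $c_{X_n}$, and since the ASes $X_1, \dots, X_n$ are pairwise distinct along a valley-free path, the only node that can ever repeat is the peak AS in the no-\rPPv case (appearing once as $u_{X_p}$ and once as $c_{X_p}$), so this walk is exactly the path $\nodepath$ that the theorem asks for. I would then record that the degenerate situations --- $n = 1$, where $\nodepath = (u_{X_1}, c_{X_1})$ via the within-AS edge with both walks empty, and the cases of an empty uphill or empty downhill walk --- are all instances of the same construction.

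The one place that needs genuine care, and hence the main obstacle, is the bookkeeping at the peak: I must check that exactly one of the \rPPv peak edge and the within-AS edge $u_{X_p} \to c_{X_p}$ is used, that both candidates for the crossing step deliver us to the same node $c_{X_q}$, and that the index conventions for $p$ and $q$ are chosen so that the uphill and downhill walks splice together correctly even when the peak AS coincides with $X_1$ or with $X_n$. Once those conventions are pinned down, the remainder is purely mechanical: for each consecutive pair of nodes in $\nodepath$ one simply points to the clause in the definition of $E$ in \autoref{sec:graph:flowgraph} that supplies the corresponding edge.
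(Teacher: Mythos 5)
Your proposal is correct and follows essentially the same route as the paper's own proof: decompose the valley-free path into an uphill prefix, an optional peer-to-peer peak, and a downhill suffix, map the uphill part onto the $u$-nodes, the downhill part onto the $c$-nodes, and cross exactly once via either the peer-to-peer edge $(u_{X_k}, c_{X_l})$ or the within-AS edge $(u_{X_k}, c_{X_k})$. Your extra remarks on the degenerate cases and on node distinctness are just additional detail on the identical construction, so nothing further is needed.
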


\begin{proof}
	Since $\aspath = (X_1, \dots, X_n)$ is a valley-free path, it can be split into three (possibly empty) parts: an ``uphill'' prefix, a single ``peak'' peer-to-peer link, and a ``downhill'' suffix.
	Formally, $\exists k,l, 1\leq k \leq l \leq k+1 \leq n$ s.t.:
	\begin{enumerate}
		\item $\forall 1 \leq i < k: (X_i, X_{i+1}) \in \CP$
		\item $k < l \implies (X_k, X_l) \in \PP$ 
		\item $\forall l \leq i < n: (X_{i+1}, X_i) \in \CP$
	\end{enumerate}
	For the first part, we can find a path in $G$ as by construction it holds that $\forall 1 \leq i < k: (u_{X_i}, u_{X_{i+1}}) \in E$.
	Likewise, for the last part, since $(X_{i+1}, X_i) \in \CP$ it holds that $\forall l \leq i < n: (c_{X_i}, c_{X_{i+1}}) \in E$.
	If the path has a peer-to-peer link, i.e., $k < l$, then $(u_{X_k}, c_{X_l}) \in E$.
	Otherwise, it holds that $X_k = X_l$ and thus $(u_{X_k}, c_{X_l}) = (u_{X_k}, c_{X_k})\in E$.
	Therefore, the path $\nodepath = (u_{X_1}, \dots, u_{X_k}, c_{X_l}, \dots, c_{X_n})$ is in G and satisfies the requirements.
\end{proof}

\begin{theorem}
	Let $G = (V,E)$ be the flow graph constructed according to \autoref{sec:graph:flowgraph} and $\nodepath = (x_1, \dots, x_m) \in V^m$ be a path in $G$.
	Then there exists a valley-free path $\aspath = (X_1 = \asn(x_1), \dots, X_n = \asn(x_m)) \in \AS^n$.
\end{theorem}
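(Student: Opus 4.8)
The plan is to establish this as the exact converse of the previous theorem by a case analysis on the ``type'' of the nodes visited along $\nodepath$, where a node is of type $u$ (unconstrained) or $c$ (constrained). The key observation is a structural lemma about $G$ read off directly from the edge set in \autoref{sec:graph:flowgraph}: every edge whose \emph{source} is a constrained node $c_A$ must be one of the customer--provider edges $(c_B, c_A)$, whose \emph{target} $c_A$ is again constrained; all other edges have an unconstrained source. Hence once a path in $G$ reaches a constrained node it can never return to an unconstrained node, and dually the out-edges of $u_A$ are either ``uphill'' edges $(u_A, u_B)$ that stay unconstrained, or the internal edge $(u_A, c_A)$, or a ``peak'' edge $(u_A, c_B)$, the latter two switching to constrained. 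Therefore the type sequence along $\nodepath = (x_1, \dots, x_m)$ is a (possibly empty) block of $u$-nodes followed by a (possibly empty) block of $c$-nodes: there is an index $0 \le j \le m$ with $x_1, \dots, x_j$ unconstrained and $x_{j+1}, \dots, x_m$ constrained.

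With the lemma in hand, I would read off the relationship types. Write $X_i = \asn(x_i)$. For $1 \le i < j$ the edge $(x_i, x_{i+1}) = (u_{X_i}, u_{X_{i+1}})$ can only be an uphill edge, so $(X_i, X_{i+1}) \in \CP$; for $j < i < m$ the edge $(c_{X_i}, c_{X_{i+1}})$ can only be a downhill edge, so $(X_{i+1}, X_i) \in \CP$; and when $0 < j < m$ the transition edge $(u_{X_j}, c_{X_{j+1}})$ is either a peak edge, giving $(X_j, X_{j+1}) \in \PP$, or the internal edge, giving $X_j = X_{j+1}$. I then form the AS sequence $X_1, \dots, X_m$, deleting the repeated entry at the transition in the internal-edge case, to obtain $\aspath$. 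By construction $\aspath$ is a run of customer--provider (``up'') edges, followed by at most one peer-to-peer (``sideways'') edge, followed by a run of provider--customer (``down'') edges, which is exactly the valley-free pattern of \autoref{sec:graph:flowgraph}, and its endpoints are $\asn(x_1)$ and $\asn(x_m)$ as required. The degenerate cases $j = 0$ (all constrained: a purely downhill path) and $j = m$ (all unconstrained: a purely uphill path) are covered by the same description with an empty up- or down-part and no peer edge.

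I expect the main obstacle to be purely the bookkeeping around the transition node rather than anything deep: one must notice that a path in $G$ can move between the two copies of \emph{distinct} ASes only through a peer edge, while movement within a \emph{single} AS happens exactly through the internal edge $u_A \to c_A$, so the projected AS sequence can contain one consecutive duplicate that has to be collapsed. It is worth stating explicitly that no \emph{other} consecutive duplication occurs: the only edge of $G$ joining two nodes of the same AS is $u_A \to c_A$, which is necessarily the transition edge; every other edge joins nodes of two distinct ASes. Hence after the collapse consecutive entries of $\aspath$ are distinct and still joined by a $\CP$ or $\PP$ edge, and the collapse merely turns an ``up, then internal, then down'' stretch into ``up, then down'', preserving valley-freeness. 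Beyond this, the argument is a direct edge-by-edge translation that mirrors the forward direction proved in the previous theorem.
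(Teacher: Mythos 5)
Your argument is essentially the paper's: the same structural observation that no edge of $G$ leads from a constrained node back to an unconstrained one, hence the path splits into a block of $u$-nodes followed by a block of $c$-nodes; the same edge-by-edge translation of the two blocks into customer--provider relations; and the same case split at the transition edge between a peak edge $(u_{X_k}, c_{X_l})$ with $(X_k, X_l) \in \PP$ and the internal edge with $X_k = X_l$, the latter handled by collapsing the duplicate.

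There is, however, one point the paper treats that your proposal omits: the projected AS sequence can contain \emph{non-consecutive} repetitions of an AS, i.e.\ loops, and your collapse step only removes the single consecutive duplicate created by the internal edge. For instance, if $A$ is a customer of both $B$ and $C$ and $B$, $C$ are peers, then $(u_A, u_B, c_C, c_A)$ is a vertex-simple path in $G$ whose projection $(A, B, C, A)$ revisits $A$; your observation that the only edge joining two nodes of the same AS is $u_A \to c_A$ rules out consecutive duplicates but not this. Since the theorem is meant to produce a valid AS path (which should be loop-free --- note also that the statement allows $n \neq m$ precisely so the output need not be the literal projection), the paper adds a final step: whenever $X_i = X_j$ with $i < j$, excise positions $i+1, \dots, j$; this preserves the endpoints and cannot create a valley, so the shortened sequence remains valley-free. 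Adding that loop-excision argument closes the gap; everything else in your proposal matches the paper's proof.
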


\begin{proof}
	W.l.o.g. assume that $x_1 = u_{X_1}$ and $x_m = c_{X_n}$.
	Since all edges in $E$ are of the form $(u_A, u_B)$, $(c_A, c_B)$, or $(u_A, c_B)$ the path can only have one transition from ``unconstrained'' to ``constrained'' nodes, i.e., $\exists 1 \leq i < n$ such that
	\begin{enumerate}
		\item $\forall 1 \leq j \leq i : x_j \in \left\{u_A | A \in \AS \right\}$
		\item $\forall i < j \leq n : x_j \in \left\{c_A | A \in \AS \right\}$
	\end{enumerate}
	Therefore, for $j < i$ it holds that $(\asn(x_j), \asn(x_{j+1})) \in \CP$, as otherwise there would be no edge in $G$ between them.
	Likewise, for $j > i$ it holds that $(\asn(x_{j+1}), \asn(x_j)) \in \CP$.
	Finally, for the edge $(x_i, x_{i+1}) = (u_{X_k}, c_{X_l})$ it must hold that either $X_k = X_l$ or that $(X_k, X_l) \in \PP$.
	Therefore, $\aspath = (\asn(x_1), \dots, \asn(x_m))$ follows the definition of a valley-free path, with the exception that it may still contain loops.
	However, if $\aspath$ contains a loop, i.e., $\exists 1 \leq i < j \leq n$ s.t. $\aspath = (X_1, \dots, X_i, \dots, X_j, \dots X_n)$ with $X_i = X_j$, one can easily construct a loop-free path $\aspath' = (X_1, \dots, X_i, \dots X_n)$ by omitting positions $i+1$ through $j$.
	Note that removal of loops does not affect the path's endpoints, and thus the resulting AS path $\aspath'$ is a valley-free path from $\asn(x_1)$ to $\asn(x_m)$.
\end{proof}

\end{document}